\documentclass[a4paper,prl,twocolumn,showpacs,superscriptaddress,nofootinbib]{revtex4}
\usepackage{amsmath,amscd,amsfonts,amssymb, color,bbm, bm,
	braket, xcolor, setspace, cancel, stmaryrd, calc, amsthm} 
\usepackage{ae}
\usepackage{physics}
\usepackage[T1]{fontenc}
\usepackage{mathptmx} 
\usepackage[cal=cm]{mathalfa}
\usepackage[ansinew]{inputenc}
\usepackage{amsmath}
\usepackage{amssymb}
\usepackage[caption=false]{subfig}
\usepackage{multirow}
\usepackage{array}
\usepackage[]{graphicx}
\usepackage{wrapfig}
\usepackage{makecell}
\usepackage{xparse}
\usepackage{dcolumn}
\usepackage{color}
\usepackage[colorlinks=true, linkcolor=red, citecolor=blue, urlcolor=blue]{hyperref}
\newtheorem{theorem}{Theorem}
\graphicspath{ {./images1/} }

\newtheorem{prop}[theorem]{Proposition}
\newtheorem{defin}{Definition}
\newtheorem{rem}[theorem]{Remark}

\newtheorem{cor}[theorem]{Corollary}

\begin{document}
	\title{Incompleteness is necessary for activation of nonlocality without entanglement}
	\author{Atanu Bhunia}
	\email{atanu.bhunia31@gmail.com}
	\affiliation{Physics and Applied Mathematics Unit, 203 B.T. Road,  Indian Statistical Institute,  Kolkata, 700108, India}
		\author{Saronath Halder}
	\email{saronath.halder@vitap.ac.in}
	\affiliation{Department of Physics, School of Advanced Sciences, VIT-AP University, Beside AP Secretariat, Amaravati 522241, Andhra Pradesh, India}
    \author{Preeti Parashar}
	\email{parashar@isical.ac.in}
	\affiliation{Physics and Applied Mathematics Unit, 203 B.T. Road,  Indian Statistical Institute,  Kolkata, 700108, India}
  \author{Ritabrata Sengupta}
	\email{rb@iiserbpr.ac.in}
	\affiliation{Department of Mathematical Sciences, Indian
		Institute of Science Education and Research Berhampur, Laudigam,
		Konisi, Berhampur 760003,
		Odisha, India}
  \pacs{03.67.Hk, 03.65.Ta, 89.70.+c}
	\begin{abstract}
		A set of orthogonal product states is said to exhibit \emph{quantum nonlocality without entanglement} if it is locally indistinguishable, i.e. no sequence of local operations and classical communication (LOCC) can perfectly discriminate the states. Building on this foundational idea, recent studies have highlighted the phenomenon of \emph{genuine activation of hidden nonlocality}, where a set of initially distinguishable orthogonal states becomes locally indistinguishable through orthogonality-preserving LOCC transformations. In this letter, we establish that any complete orthogonal product basis that is initially locally distinguishable remains so under all orthogonality-preserving local projective measurements, thereby ruling out activation via orthogonality-preserving local projective measurements and classical communication. We further introduce and formalise the notions of 
        \emph{strongly local sets}, namely locally distinguishable sets that remain non-activable under all bipartitions. Interestingly, the study of \emph{local activability} of distinguishable sets is useful to characterise the boundary between LOCC distinguishability and its irreversible loss in multipartite systems. Our results provide a rigorous structural understanding of local-to-nonlocal transitions in quantum state discrimination.
	\end{abstract}
	
	\maketitle
    Local distinguishability of quantum states under Local Operations and Classical Communication (LOCC) \cite{BennettUPB1999,Walgate2000,Virmani,Ghosh2001,Groisman,Walgate2002,Divincinzo,Horodecki2003,Fan2004,Ghosh2004,Nathanson2005,Watrous2005,Niset2006,Ye2007,Fan2007,Runyo2007,somsubhro2009,Feng2009,Runyo2010,Yu2012,Yang2013,Zhang2014,somsubhro2010,yu2014,somsubhro2014,somsubhro2016,bennett1996,popescu2001,xin2008,somsubhro2009(1)} offers profound insights into the nonlocal features of quantum mechanics. A particularly striking phenomenon in this context is \emph{nonlocality without entanglement}~\cite{Zhang2015,Wang2015,Chen2015,Yang2015,Zhang2016,Xu2016(2),Zhang2016(1),Xu2016(1),Halder2019strong nonlocality,Halder2019peres set,Xzhang2017,Xu2017,Wang2017,Cohen2008,somsubhro2018,zhang2018,Halder2018,Yuan2020,Rout2019,bhunia2020,bhunia2023,biswas2023,Zhang2019,bhunia2022}, wherein a set of orthogonal quantum states--despite being unentangled--cannot be perfectly distinguished using LOCC alone. This departure from classical expectations has attracted significant attention, linking state distinguishability to fundamental aspects of quantum communication like data hiding \cite{terhal58,divincenzo580,lamidatahiding,terhaldatahiding,chaves2020,wehner2020,winterdatahiding,haydendatahiding} and secret sharing \cite{rahaman330,markham309,wang320}. This intriguing limitation naturally motivates a systematic investigation of the operational constraints imposed by LOCC on the extraction of classical information from composite quantum systems.

    A composite quantum system may exhibit intrinsically nonclassical properties. Classical information encoded in states of a composite quantum system involving spatially separated subsystems, may not always be decodable under LOCC. Such sets of states are called nonlocal due to their indistinguishable nature under LOCC. To elaborate a little, suppose a state is secretly chosen from a known set of states of a bipartite system shared between two distant parties, say, Alice and Bob. Their goal is to locally figure out the exact identity of the chosen state. Local quantum state discrimination process plays a prominent role in exploring the restrictions put forward by LOCC \cite{LOCC} on quantum systems with spatially separated subsystems.

	While entanglement is a key resource in quantum information, it is now well understood that even product states can exhibit strong nonlocal behavior when the number of parties is more or the structure of the state set is nontrivial. Such sets include Unextendible Product Bases (UPBs), which are locally indistinguishable despite consisting entirely of orthogonal product states~\cite{Divincinzo,BennettUPB1999}. This has motivated a broader investigation into the conditions under which local indistinguishability arises and whether it can be operationally induced through local protocols \cite{subrata2024,bandhyopadhyay201,Li2022}.
	
	A restricted yet operationally significant subclass of LOCC is the paradigm of \emph{Local Projective Measurements with Classical Communication} (LPCC) \cite{xin2008}. In this setting, parties are limited to performing local projective measurements and coordinating their actions through classical communication. Although strictly weaker than general LOCC, LPCC captures essential aspects of local information processing in realistic experimental conditions \cite{subrata2024,xin2008}. A key phenomenon of interest in this context is the \emph{activation of hidden nonlocality} \cite{bandhyopadhyay201}, where an initially locally distinguishable set of orthogonal states can evolve--under orthogonality-preserving local projective measurements--into a set that is no longer distinguishable by any sequence of LOCC operations.\footnote{Note that when we say a set is distinguishable or indistinguishable, the underlying class of operation is LOCC. But given a distinguishable set, we check its transformation to an indistinguishable one under LPCC here.} Such activation reveals hidden nonlocal features embedded within otherwise classically accessible state sets \cite{bandhyopadhyay201,subrata2024,Li2022,GhoshStrongActivation2022}. From an operational standpoint, it is natural to ask whether the full generality of LOCC is required to reveal local indistinguishability. Although LOCC protocols allow arbitrary local POVMs, implementing such measurements typically requires ancillary systems and additional unitary interactions, making them experimentally more demanding than standard projective measurements. Furthermore, in our case post-measurement states are particularly important. Now, in case of a POVM, it is not always possible to accurately determine the post-measurement state. Thus, we stick to local projective measurements only. These motivate the study of the restricted class of protocols consisting of LPCC, which are operationally simpler and closer to realistic implementations. Interestingly, while LPCC suffices in several scenarios-for example in the characterization of $2\otimes n$ state discrimination by Walgate {et al.}~\cite{Walgate2002}, there exist sets of orthogonal product states that are distinguishable by LOCC but not by LPCC, as shown by Bennett \emph{et al.}~\cite{BennettUPB1999}. Understanding the gap between LOCC and LPCC therefore provides deeper insight into the operational structure of local indistinguishability.
	
	In this work, we focus on the following general question: Given any complete orthonormal locally distinguishable basis, can it be transformed to a locally indistinguishable one by orthogonality-preserving LOCC? As described earlier, for our case, it is more important to stick to local projective measurements. Furthermore, since we want to begin with a distinguishable set, it must not contain any entangled state, otherwise, it will be a locally indistinguishable set \cite{Horodecki2003}. Clearly, now the aforesaid general question boils down to the following: given any complete orthonormal product basis which is distinguishable under LOCC, can it be transformed to a set which is indistinguishable under LOCC, via orthogonality-preserving local projective measurements and classical communication? This is depicted in Fig.~\ref{statement}. In other words, we are interested in the activation of quantum nonlocality without entanglement. To answer this question, it is important to understand the structure and behavior of \emph{complete orthogonal product bases} (COPB) under LPCC protocols. In this workwe shall consider orthogonality preserving LPCC (OP-LPCC). We prove that such sets are \emph{non-activable}, meaning that no orthogonality-preserving local projective measurement even with classical communication, can transform a COPB into a locally indistinguishable set. This establishes a strong constraint on the evolution of complete product bases and rules out the possibility of activating hidden nonlocality through OP-LPCC in these cases.
\vspace{-4mm}
    \begin{figure}[h!]
		\centering
		\includegraphics[width=0.51\textwidth, clip=true, trim = 10mm 20mm 30mm 20mm]{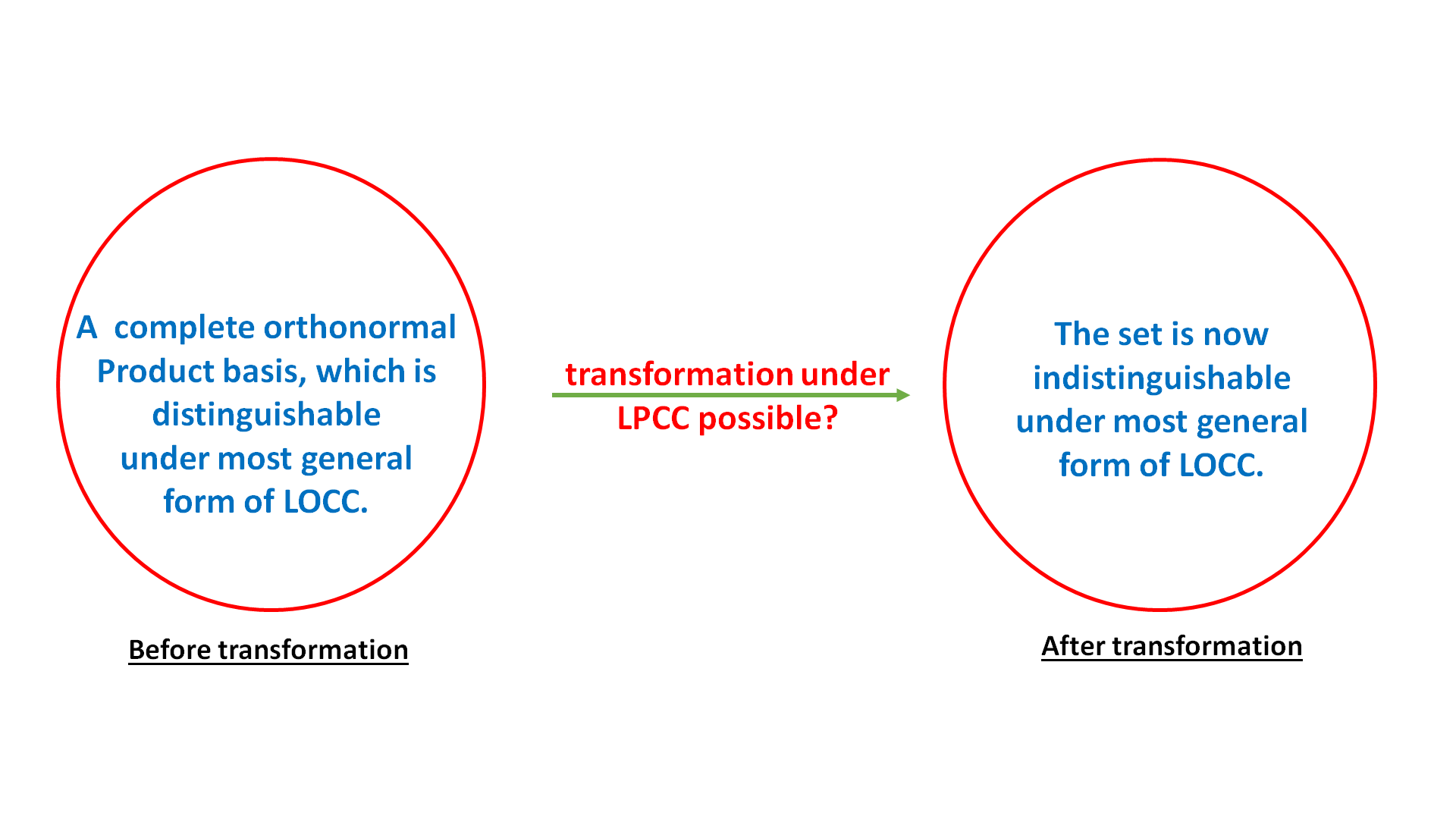}
		\caption{Diagram to illustrate the statement of the problem.}\label{statement}
	\end{figure}
\vspace{-4mm}    
\par The structural rigidity of COPB under LPCC not only rules out such activation in bipartite systems but also generalizes to multipartite scenarios. In these systems, one might expect that more intricate partitions and coordination strategies could induce activation \cite{subrata2024}. We show that,  Surprisingly, this may not be the case always. Complete orthogonal product bases in multipartite Hilbert spaces remain locally distinguishable and \emph{resist activation across all bipartitions}. We refer to this robust property as \emph{strong locality}. That is, no subset of parties--regardless of how they are grouped--can collaborate via orthogonality-preserving local projective measurements (OPLPM) to transform the set into one that is locally indistinguishable \cite{subrata2024,GhoshStrongActivation2022}. This reveals a profound form of non-activability, establishing the complete product bases as structurally stable configurations in the broader landscape of quantum nonlocality. Our results thus contribute to a deeper understanding of the limitations of LPCC \cite{xin2008}, the geometry of product state sets \cite{Li2022}, and the operational emergence of nonlocality in the absence of entanglement \cite{BennettUPB1999}. In this way, we arrive to the conclusion that the `incompleteness' for the given set of orthogonal product states constitutes a necessary condition for the activation of hidden nonlocality under OP-LPCC.
	
	In order to rigorously formulate and analyze the activation behavior of orthogonal product states under local constraints, we require a precise mathematical framework for local measurements and distinguishability \cite{Walgate2002}. We now recall the standard formalism of quantum measurements, followed by key definitions that will be used throughout our analysis.
	
	A measurement on a $d$-dimensional quantum system can be expressed as a set of positive operator-valued measure (POVM) elements $\left\{M_k\right\}_k$. These elements are  positive semidefinite Hermitian matrices that satisfy the completeness relation $\sum_k M_k=\mathrm{I}_d$, where $\mathrm{I}_d$ is the identity matrix of order $d$. We will first review some of the definitions used here.
\begin{defin}
		\cite{Walgate2002, Halder2018} If all the POVM elements of a measurement structure, corresponding to a discrimination task of a given set of states, are proportional to the identity matrix, then such a measurement is not useful to extract information for this task and is called a $trivial\;measurement$. On the other hand, if not all POVM elements of a measurement
		are proportional to the identity matrix then the measurement is said to be a \emph{nontrivial measurement}.
	\end{defin}
    
	\begin{defin}
		\cite{Walgate2002, Halder2018} Consider a local measurement to distinguish a fixed set of pairwise orthogonal quantum states. After performing that measurement, if the post measurement states are also	pairwise orthogonal to each other then such a measurement is said to be an $orthogonality-preserving\; local\;measurement$ (OPLM).
	\end{defin}
	\begin{defin}
		\cite{Halder2019strong nonlocality} A set of orthogonal quantum states is $locally\; irreducible$ if it is not possible to eliminate one or more quantum states from the set by nontrivial OPLM.
	\end{defin} 
	\begin{defin}
		A set of orthogonal quantum states is $locally\; indistinguishible$  if it is not possible to distinguish the whole set by nontrivial OPLM. However, for a locally indistinguishable set, it may or may not be possible to eliminate one or more states by OPLM. 
	\end{defin}
	Therefore, it is by definition implied that all locally irreducible states are locally indistinguishable but the converse is not always true.
	
	We now formally characterize the structure-preserving nature of OPLPM when applied to complete product bases. Specifically, we show that such measurements cannot generate any new directions or evolve the system beyond the original set.

	\begin{prop}
	 Let \( \mathcal{S} = \{ |\psi_i\rangle = |a_i\rangle \otimes |b_i\rangle \}_{i=1}^n \) be a set of mutually orthogonal product states in a finite-dimensional Hilbert space \( \mathcal{H}_A \otimes \mathcal{H}_B \) and \( \{P_k\} \) be a nontrivial OPLPM acting on any subsystem of \( \mathcal{S} \). If \( \mathcal{S} \) is a complete orthonormal product basis of \( \mathcal{H}_A \otimes \mathcal{H}_B \), then for each outcome \( k \), the nonzero post-measurement states form a proper subset of \( \mathcal{S} \).
		\label{thm:OPLM-completeness}
	\end{prop}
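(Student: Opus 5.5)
The plan is to show that, for a complete basis, an orthogonality-preserving local projector can only act on each local component $|a_i\rangle$ as either $0$ or the identity. Then every nonzero post-measurement state is literally one of the original states, and nontriviality forces some state to be eliminated. By symmetry I take the measurement on Alice's side, $\{P_k\otimes \mathrm{I}_B\}$, with $P_k$ orthogonal projectors summing to $\mathrm{I}_A$. The post-measurement states for outcome $k$ are $(P_k|a_i\rangle)\otimes|b_i\rangle$, up to normalisation.

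\emph{Step 1 (orthogonality constraints).} For $i\neq j$, orthogonality of the original states gives $\langle a_i|a_j\rangle\langle b_i|b_j\rangle=0$. Orthogonality preservation gives $\langle a_i|P_k|a_j\rangle\langle b_i|b_j\rangle=0$. So, writing $J_i=\{j:\langle b_i|b_j\rangle\neq0\}$ (which contains $i$), every $j\in J_i\setminus\{i\}$ satisfies both $\langle a_i|a_j\rangle=0$ and $\langle a_i|P_k|a_j\rangle=0$.

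\emph{Step 2 (using completeness).} Since $\mathcal S$ is a complete orthonormal basis,
\[
\sum_j |a_j\rangle\langle a_j|\otimes|b_j\rangle\langle b_j|=\mathrm{I}_A\otimes\mathrm{I}_B .
\]
Sandwiching the $B$ factor with $\langle b_i|\cdot|b_i\rangle$ yields
\[
\sum_{j\in J_i}|\langle b_i|b_j\rangle|^2\,|a_j\rangle\langle a_j|=\mathrm{I}_A .
\]
Hence $\{|a_j\rangle\}_{j\in J_i}$ spans $\mathcal H_A$. By Step 1, the vectors with $j\in J_i\setminus\{i\}$ are orthogonal to $|a_i\rangle$, so they must span the orthogonal complement $|a_i\rangle^{\perp}$. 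By Step 1 again, $P_k|a_i\rangle$ is orthogonal to all of them, since $P_k$ is Hermitian and $\langle a_j|P_k|a_i\rangle=\overline{\langle a_i|P_k|a_j\rangle}=0$. Therefore $P_k|a_i\rangle\in\mathrm{span}\{|a_i\rangle\}$, so $|a_i\rangle$ is an eigenvector of the projector $P_k$. This gives $P_k|a_i\rangle\in\{0,|a_i\rangle\}$ for every $i$ and $k$.

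\emph{Step 3 (proper subset).} For each outcome $k$, the nonzero post-measurement states are exactly the $|\psi_i\rangle$ with $P_k|a_i\rangle=|a_i\rangle$, so they form a subset of $\mathcal S$. This subset is nonempty because $P_k\neq0$ and the $|a_i\rangle$ span $\mathcal H_A$. Suppose it were all of $\mathcal S$. Then $P_k$ would fix every $|a_i\rangle$, and since these span $\mathcal H_A$, we would get $P_k=\mathrm{I}_A$, so all other $P_{k'}=0$. This contradicts nontriviality. Hence the subset is proper. The same argument with the roles of $A$ and $B$ exchanged covers measurements on Bob's side.

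The main obstacle is Step 2: extracting the spanning property of $\{|a_j\rangle\}_{j\in J_i}$ from completeness, since this is exactly where incompleteness would break the argument. The partial-inner-product identity above is how I intend to handle it.
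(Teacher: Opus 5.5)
Your proof is correct, and it takes a genuinely different and considerably tighter route than the paper's.

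The paper starts from the dichotomy $P_k|a_i\rangle\in\{|a_i\rangle,0\}$, according to whether $|a_i\rangle$ lies in $\operatorname{supp}(P_k)$ or in its orthogonal complement. It then tries to exclude ``new'' product vectors by a case analysis:
\begin{itemize}
\item A single new vector is asserted to lie in the span of its parent state ``since $\mathcal{S}$ is complete''.
\item For two new vectors, only the subcases $\langle b_i|b_j\rangle=0$ and $|b_i\rangle=|b_j\rangle$ are treated, the latter via orthogonality preservation.
\end{itemize}

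You instead \emph{derive} the dichotomy. The partial inner product of the completeness relation, $\sum_j|\langle b_i|b_j\rangle|^2\,|a_j\rangle\langle a_j|=\mathrm{I}_A$, shows that the $|a_j\rangle$ with $j\in J_i\setminus\{i\}$ span exactly $|a_i\rangle^{\perp}$. Orthogonality preservation then forces $|a_i\rangle$ to be an eigenvector of $P_k$.

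This buys several things:
\begin{itemize}
\item It handles all pairs and any number of ``new'' vectors at once. This includes the generic situation where $|b_i\rangle$ and $|b_j\rangle$ are neither orthogonal nor parallel, which the paper's case split does not cover.
\item It isolates precisely where completeness enters. This is also visibly where the argument fails for an incomplete set such as $\mathcal{S}_2$: there $J_1\setminus\{1\}$ spans only two dimensions of Bob's five-dimensional $|b_1\rangle^{\perp}$.
\item It actually proves the ``proper subset'' part from nontriviality, via $P_k=\mathrm{I}_A$. The paper's argument only concludes containment in $\mathcal{S}$.
\item It extends beyond projectors. Step 2 uses only the Hermitian operator appearing in the orthogonality condition, so replacing $P_k$ by $M^\dagger M$ for any orthogonality-preserving local POVM shows each $|a_i\rangle$ is an eigenvector of $M^\dagger M$. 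That is exactly the diagonal form $\operatorname{diag}(\lambda_0,\lambda,\lambda)$ the appendix computes by hand for $\mathcal{S}_1$.
\end{itemize}

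The paper's version is more pictorial and closer to the tiling intuition, but it is heuristic where yours is rigorous.
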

	\begin{proof}
		Suppose \( \mathcal{S} = \{ |\psi_i\rangle = |a_i\rangle \otimes |b_i\rangle \}_{i=1}^{d_A d_B} \) be a complete orthonormal product basis for \( \mathcal{H}_A \otimes \mathcal{H}_B \). Let Alice make a nontrivial OPLPM \( \{P_k\} \) on \( \mathcal{H}_A \). Then the post-measurement state for outcome \( k \) is of the form,
		\[
		|\psi_i^{(k)}\rangle = (P_k \otimes \mathbb{I}) |\psi_i\rangle = P_k |a_i\rangle \otimes |b_i\rangle.
		\]
		
		As, \( P_k \) is an orthogonal projector,
		\[
		P_k |a_i\rangle =
		\begin{cases}
			|a_i\rangle & \text{if } |a_i\rangle \in \operatorname{supp}(P_k), \\
			0 & \text{if } |a_i\rangle \in \overline{\operatorname{supp}(P_k)}.
		\end{cases}
		\]
		Thus,
		\[
		|\psi_i^{(k)}\rangle =
		\begin{cases}
			|\psi_i\rangle & \text{if } |a_i\rangle \in \operatorname{supp}(P_k), \\
			0 & \text{if } |a_i\rangle \in \overline{\operatorname{supp}(P_k)}.
		\end{cases}
		\]
	Here,	\(\operatorname{supp}   (P_k)\) denotes the support of \(P_k\), while 
      \(\overline{\operatorname{supp}(P_k)}\) denotes its orthogonal complement. Now, for the vectors \( |a_i\rangle \in \mathcal{H}_A \) which are not fully contained in \( \operatorname{supp}(P_k) \), the action of the projector \( P_k \) must gives post-measurement states, which are not proportional to any member of \( \mathcal{S} \). \\
      \textbf{Case (a)}- Firstly, let us assume that there exists exactly one state $|\psi_i\rangle\in\mathcal{S}$, for which, the resulting state \( |\psi_i^{(k)}\rangle = (P_k \otimes \mathbb{I})|\psi_i\rangle \) becomes a new product state that is not proportional to any member of \( \mathcal{S} \). Since the set \( \mathcal{S} \) is a complete orthonormal basis of the full space \( \mathcal{H}_A \otimes \mathcal{H}_B \), the state \(|\psi_i^{(k)}\rangle = (P_k \otimes \mathbb{I})|\psi_i\rangle \) must lie in the span of \{\( |\psi_i\rangle \)\}. So \(|\psi_i^{(k)}\rangle \; \text{and} \; |\psi_i\rangle \) are proportional.
      
    \textbf{Case (b)}- Let us assume that
    for a fixed $i$ and $j,\;
    \, i \neq j,$ two states 
    $|\psi_i\rangle, |\psi_j\rangle \in \mathcal{S}$, such that the post-measurement states 
    \[
    |\psi_i^{(k)}\rangle = (P_k \otimes \mathbb{I})|\psi_i\rangle, 
    \qquad 
    |\psi_j^{(k)}\rangle = (P_k \otimes \mathbb{I})|\psi_j\rangle
    \]
   are new product states not proportional to any member of $\mathcal{S}$. Since the measurement acts only on Alice's system, both 
   $|\psi_i^{(k)}\rangle$ and $|\psi_j^{(k)}\rangle$ must lie in the span of the original pair $\{|\psi_i\rangle, |\psi_j\rangle\}$.
   
    (i) If Bob's local parts 
    $|b_i\rangle, |b_j\rangle$ are orthogonal, then the post-measurement states must reduce to 
    $|\psi_i\rangle$ and $|\psi_j\rangle$ by the previous case. Hence, no new product state can arise.  

    (ii) If Bob's system is fixed, i.e., 
    $|b_i\rangle = |b_j\rangle$, then the action of $P_k$ on Alice's 
    states $|a_i\rangle, |a_j\rangle$ generally produces two nonorthogonal states. This contradicts the orthogonality-preserving 
    condition, which requires that
    \[
    \langle \psi_i^{(k)} | \psi_j^{(k)} \rangle 
    = \langle P_k a_i | P_k a_j \rangle \cdot 
      \langle b_i | b_j \rangle = 0,
    \qquad \forall \, i \ne j .
    \]

  Thus, in either case, assuming the existence of two new product 
  states leads to a contradiction. Therefore, no two such new product states can exist simultaneously under an OPLPM. Hence, nonzero post-measurement states must be contained in \( \mathcal{S} \).
\end{proof}
\begin{figure}[h!]
		\centering	\includegraphics[width=0.37\textwidth]{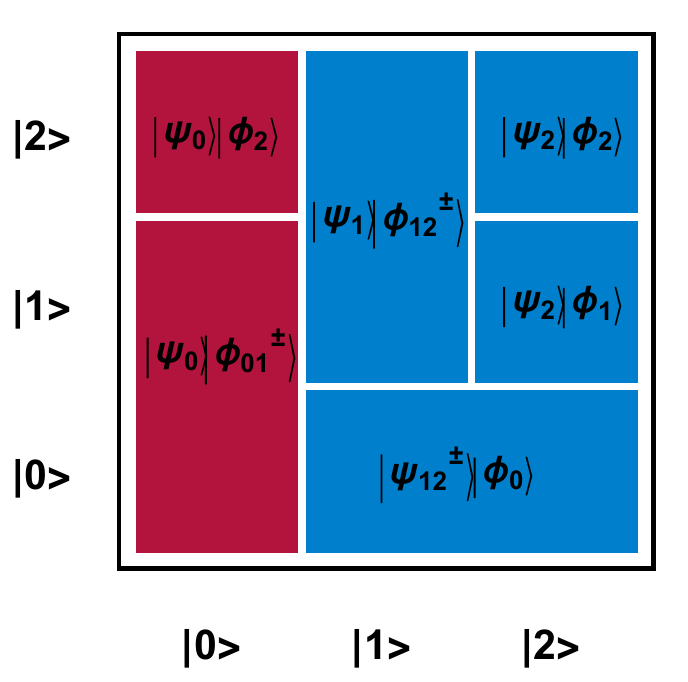}
		\caption{Tiling diagram for the states in \( \mathbf{\mathcal{S}_1} \). The red colored region indicates the support of Alice's measurement outcome \( \mathcal{K}_1^A \) corresponding to the measurement set up $\mathcal{K}_A \equiv\left\{\mathcal{K}_1^A:=\right.$ $P\left[|\mathbf{0}\rangle_A\right],$ $\mathcal{K}_2^A := \mathbb{I} - \mathcal{K}_1^A\}$, resulting in post-measurement states which are all contained in the original set. The horizontal side of the tiles belongs to (A)lice and the vertical side of the tiles belongs to (B)ob.}\label{fig44-1}
	\end{figure}
  We give an example for better understanding. Consider the set $\mathbf{\mathcal{S}_1} =\left\{\left|\phi_i\right\rangle_{A B}\right\}\in \mathbf{\mathbf{C}^3 \otimes \mathbf{C}^3}$ \cite{bhunia2023}, by, 
	\begin{multline}
		\mathbf{\mathcal{S}_1}=
		\begin{Bmatrix}
			{|\Psi_0\rangle_{A}|{\Phi}_{01}^{\pm}\rangle_{B},}\;
			{|\Psi_0\rangle_{A}|{\Phi}_{2}\rangle_{B},}\;
			{|{\Psi}_{12}^{\pm}\rangle_{A}|\Phi_0\rangle_{B},}\;\\
			{|\Psi_1\rangle_{A}|{\Phi}_{12}^{\pm}\rangle_{B},}\;
			{|{\Psi}_{2}\rangle_{A}|\Phi_1\rangle_{B},}\;
			{|{\Psi}_{2}\rangle_{A}|{\Phi}_{2}\rangle_{B}}\\
		\end{Bmatrix}
		\label{1}
	\end{multline}
	where,	${|{\Psi}_{ij}^{\pm}\rangle_{A}}=\left(\frac{|\mathbf{i}\rangle\pm |\mathbf{j}\rangle}{\sqrt{2}}\right)_{A},\;{|{\Phi}_{ij}^{\pm}\rangle_{B}}=\left(\frac{|\mathbf{i}\rangle\pm |\mathbf{j}\rangle}{\sqrt{2}}\right)_{B},$ and ${|{\Psi}_{k}\rangle_{A}}=|\mathbf{k}\rangle_{A},\;{|{\Phi}_{k}\rangle_{B}}=|\mathbf{k}\rangle_{B}$, see Fig.~\ref{fig44-1}.
	
    Each state in $\mathbf{\mathcal{S}_1}$ is a product of either a standard basis vector \( |i\rangle \) or a superposition \( \tfrac{1}{\sqrt{2}}(|i\rangle \pm |j\rangle) \), i.e., of the form $|\phi_i\rangle = |a_i\rangle \otimes |b_i\rangle,$
	and the cardinality of $\mathbf{\mathcal{S}_1}$ matches $\dim \mathbf{(\mathbf{C}^3 \otimes \mathbf{C}^3)}$. Since, \(\{\ket{\phi_i}\}\) are pairwise orthogonal, the states of \( \mathcal{S}_1 \) form a complete orthonormal product basis.
	
	Now, consider any OPLPM applied by any party, say Alice. According to the Proposition \ref{thm:OPLM-completeness}, the post-measurement states under such a Projection-valued measurement (PVM) must be either zero or remain within \( \mathbf{\mathcal{S}_1} \). That is, no new product vector outside this set can emerge after the measurement. Thus, \( \mathbf{\mathcal{S}_1} \) serves as an explicit example where OPLPM cannot induce evolution within the product structure. It demonstrates the fact that complete orthonormal product bases are invariant (up to elimination of states) under OPLPM.
	\begin{cor} If \( \mathcal{S} \) is not complete, then there may exist a nontrivial OPLPM \( \{P_k\} \), such that for at least one outcome \( k \), the nonzero post-measurement states do not form a proper subset of \( \mathcal{S} \).
	\end{cor}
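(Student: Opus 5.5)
The corollary is an existence (``may exist'') statement, so I will prove it with one explicit incomplete orthogonal product set together with one nontrivial OPLPM whose post-measurement states leave the set. The point is to isolate where completeness enters Proposition~\ref{thm:OPLM-completeness}: there, every local vector $|a_i\rangle$ is forced to lie either in $\operatorname{supp}(P_k)$ or in its orthogonal complement. Once the product states no longer span $\mathcal{H}_A\otimes\mathcal{H}_B$, a projector can cut a local vector ``diagonally'' and orthogonality can still survive, because it is carried by the other party.

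The simplest construction is in $\mathbb{C}^2\otimes\mathbb{C}^2$. Take
\[
\mathcal{S}=\left\{\, |\Psi_{01}^{+}\rangle_A|\mathbf{0}\rangle_B,\ |\mathbf{0}\rangle_A|\mathbf{1}\rangle_B \,\right\},
\]
which is orthogonal because of Bob's parts. It has two elements in a four-dimensional space, so it is incomplete. Let Alice measure $\{P_1=|\mathbf{0}\rangle\langle\mathbf{0}|,\ P_2=|\mathbf{1}\rangle\langle\mathbf{1}|\}$. This measurement is nontrivial, since $P_1$ is not proportional to the identity. For outcome $k=1$ the unnormalised post-measurement states are $\tfrac{1}{\sqrt2}|\mathbf{0}\rangle|\mathbf{0}\rangle$ and $|\mathbf{0}\rangle|\mathbf{1}\rangle$. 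They remain orthogonal through Bob's $\langle \mathbf{0}|\mathbf{1}\rangle=0$, and the same holds for outcome $k=2$, so the measurement is orthogonality preserving. However, $|\mathbf{0}\rangle|\mathbf{0}\rangle$ is not proportional to any member of $\mathcal{S}$. Hence the set of nonzero post-measurement states for $k=1$ is not a subset of $\mathcal{S}$, which is exactly the conclusion of the corollary.

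To make the link with the paper's running example, I would add a remark that the same phenomenon appears once $\mathcal{S}_1$ is made incomplete. Delete $|\Psi_0\rangle_A|\Phi_2\rangle_B$ and $|\Psi_1\rangle_A|\Phi_{12}^{\pm}\rangle_B$ from $\mathcal{S}_1$, and let Alice project onto $P[|\mathbf{1}\rangle_A]$. The states $|\Psi_{12}^{\pm}\rangle_A|\Phi_0\rangle_B$ then both collapse to $|\mathbf{1}\rangle|\mathbf{0}\rangle$, which is non-orthogonal, so this naive choice fails. Checking orthogonality preservation for every pair of states sharing Bob's support is therefore the only real obstacle.

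To avoid that obstacle I would keep the primary proof on the two-qubit example above, where the check reduces to a single inner product. I would then argue by contrast with Proposition~\ref{thm:OPLM-completeness}: the new vector $|\mathbf{0}\rangle|\mathbf{0}\rangle$ exists precisely because the span of $\mathcal{S}$ is a proper subspace, so a projector is no longer forced to map each state into itself or to zero.
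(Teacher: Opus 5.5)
Your proof is correct. Like the paper, you settle the existence claim with a single explicit witness, but you choose a different witness.

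The paper uses the incomplete set $\mathcal{S}_2\subset\mathbb{C}^3\otimes\mathbb{C}^6$ with Bob's measurement $\{P[(|\mathbf{0}\rangle,|\mathbf{1}\rangle,|\mathbf{2}\rangle)_B],\,P[(|\mathbf{3}\rangle,|\mathbf{4}\rangle,|\mathbf{5}\rangle)_B]\}$. Its outcomes produce new product states that form copies of the Tiles UPB.

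Your two-qubit witness is much easier to verify. Orthogonality preservation is one inner product for $k=1$ and vacuous for $k=2$, where only one state survives. It also isolates the mechanism cleanly: Alice's factor is cut diagonally while orthogonality is carried by Bob's factor. What the paper's heavier witness buys is that the new states are locally indistinguishable, so the same example doubles as the activation example used later. Yours shows only that new product directions appear, which is all the corollary asserts.

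Your closing heuristic can be made exact. Orthogonality preservation means $\langle\psi_i|P_k\otimes\mathbb{I}|\psi_j\rangle=0$ for $i\neq j$. For a complete basis this forces every $|\psi_i\rangle$ to be an eigenvector of $P_k\otimes\mathbb{I}$. For incomplete $\mathcal{S}$ it constrains only the compression of $P_k\otimes\mathbb{I}$ to $\operatorname{span}\mathcal{S}$, so $(P_k\otimes\mathbb{I})|\psi_i\rangle$ may leave $\operatorname{span}\mathcal{S}$, as $|\mathbf{0}\rangle|\mathbf{0}\rangle$ does in your example.

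Finally, do not state the $\mathcal{S}_1$ remark as written, since the deletion you propose fails. It is repaired by deleting only $|\Psi_{12}^{+}\rangle_A|\Phi_0\rangle_B$ and letting Alice measure $\{P[|\mathbf{1}\rangle_A],\,\mathbb{I}-P[|\mathbf{1}\rangle_A]\}$. This measurement is orthogonality preserving and produces the new vectors $|\mathbf{1}\rangle_A|\mathbf{0}\rangle_B$ and $|\mathbf{2}\rangle_A|\mathbf{0}\rangle_B$, with orthogonality carried by Bob's side in each outcome.
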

	Let us now present a concrete example to demonstrate the case of an incomplete basis, where an OPLPM can evolve product states into new directions not originally present in the given set. Consider, the set $\mathbf{\mathcal{S}_2} =\left\{\left|\phi_i\right\rangle_{A B}\right\}_{i=1}^{10}\left(\in \mathbf{\mathbf{C}^3 \otimes \mathbf{C}^6}\right)$ \cite{GhoshStrongActivation2022}, where
	\begin{multline}
		\begin{aligned}
			& \left|\phi_1\right\rangle_{A B}=|\mathbf{0}\rangle_A|\mathbf{0}-\mathbf{1}+\mathbf{4}-\mathbf{5}\rangle_B \\
			& \left|\phi_2\right\rangle_{A B}=|\mathbf{2}\rangle_A|\mathbf{1}-\mathbf{2}+\mathbf{5}-\mathbf{3}\rangle_B \\
			& \left|\phi_3\right\rangle_{A B}=|\mathbf{1-2}\rangle_A|\mathbf{0}-\mathbf{4}\rangle_B \\
			& \left|\phi_4\right\rangle_{A B}=|\mathbf{0-1}\rangle_A|\mathbf{2}-\mathbf{3}\rangle_B \\
			& \left|\phi_5\right\rangle_{A B}=|\mathbf{0+1+2}\rangle_A|\mathbf{0}+\mathbf{1}+\mathbf{2}+\mathbf{3}+\mathbf{4}+\mathbf{5}\rangle_B
		\end{aligned}\label{5}
	\end{multline}
	
	The set \( \mathbf{\mathcal{S}_2} \) (see Fig.~\ref{f44-3}) consists of orthogonal product states with cardinality strictly less than \( \dim(\mathbf{\mathbf{C}^3 \otimes \mathbf{C}^6})\), and hence forms an \emph{incomplete} orthogonal product set in \( \mathbf{\mathbf{C}^3 \otimes \mathbf{C}^6} \). Now, let Bob perform a local measurement $\mathcal{K}_B \equiv\left\{\mathcal{K}^B_1:=P\left[(|\mathbf{0}\rangle,|\mathbf{1}\rangle,|\mathbf{2}\rangle)_B\right], \mathcal{K}^B_2:=\right.$ $\left.P\left[(|\mathbf{3}\rangle,|\mathbf{4}\rangle,|\mathbf{5}\rangle)_B\right]\right\}$. If $\mathcal{K}^B_1$ clicks they end up with,
	$$
	\left\{\begin{array}{c}
		|\mathbf{0}\rangle_A|\mathbf{0}-\mathbf{1}\rangle_B,|\mathbf{2}\rangle_A|\mathbf{1}-\mathbf{2}\rangle_B, 
		|\mathbf{1-2}\rangle_A|\mathbf{0}\rangle_B,|\mathbf{0-1}\rangle_A|\mathbf{2}\rangle_B, \\
		|\mathbf{0+1+2}\rangle_A|\mathbf{0}+\mathbf{1}+\mathbf{2}\rangle_B
	\end{array}\right\}
	$$
	Clearly, although all states in \( \mathbf{\mathcal{S}_2^{(1)}} \)(set of post-measurement reduced states for $\mathcal{K}^B_1$) are orthogonal and product, some of them are not proportional to any of the original states in \( \mathbf{\mathcal{S}_2} \), thereby implying $\mathbf{\mathcal{S}_2^{(1)} \nsubseteq \mathcal{S}_2}.$
	\begin{figure}[h!]
		\centering
		\includegraphics[width=0.48\textwidth]{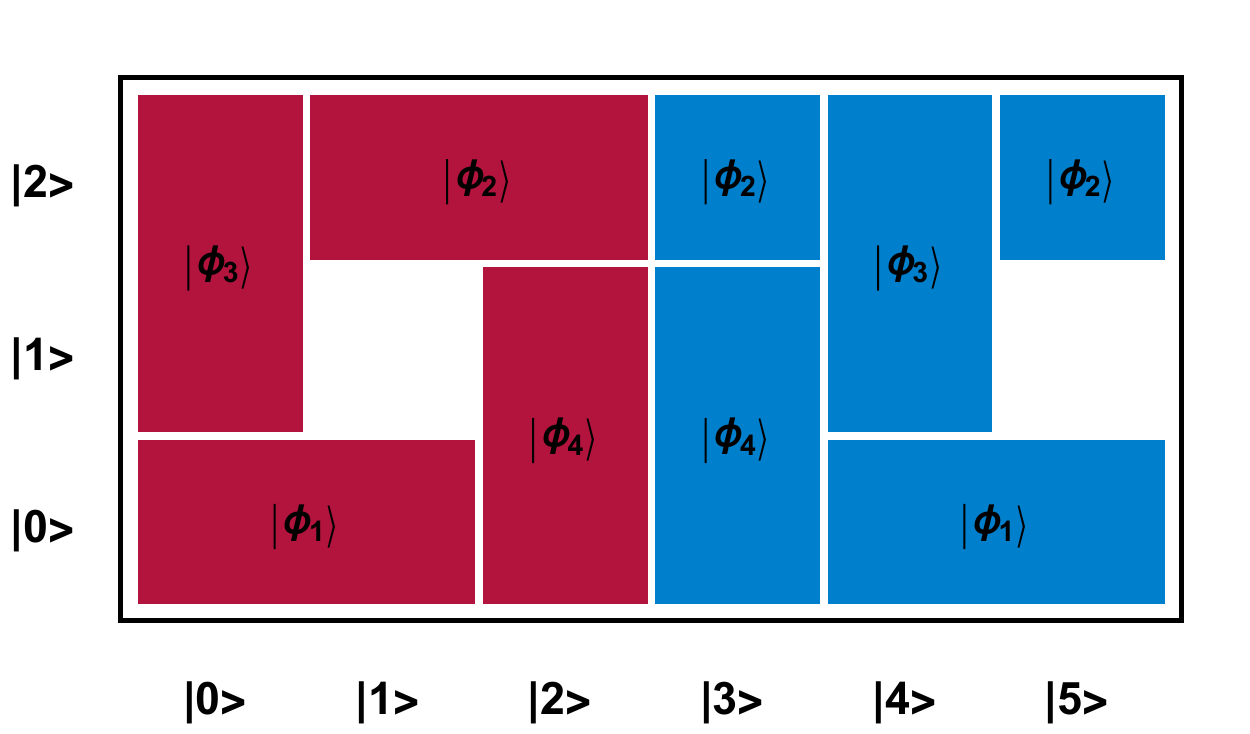}
		\caption{Tiling diagram for the states in \( \mathcal{S}_2 \). The outlined region indicates the support of Bob's measurement outcome \( \mathcal{K}_1^B \), resulting in post-measurement states not all of which are contained in the original set. The horizontal side of the tiles belongs to (B)ob and the vertical side of the tiles belongs to (A)lice.}
		\label{f44-3}
	\end{figure}
	Hence, the post-measurement states do not lie within the set \( \mathbf{\mathcal{S}_2} \), and new product vectors emerge through the action of an OPLPM. This explicitly confirms that when the initial set \( \mathbf{\mathcal{S}}\) is incomplete, local OPLPM can indeed cause the system to evolve into new orthogonal product directions.
	\par This contrasting behavior highlights a critical distinction: the emergence of new product vectors under local PVM is intimately tied to whether the original product basis spans a product subspace. In particular, if the initial set \( \mathcal{S} \) forms a complete orthonormal product basis of a tensor product subspace, the dynamics under OPLPM remain confined within \( \mathcal{S} \), as we formalize in the following proposition.
	
	\begin{prop}
		Let \( \mathcal{S} = \{ |\psi_i\rangle = |a_i\rangle \otimes |b_i\rangle \}_{i=1}^n \) be an incomplete orthogonal product basis in a finite-dimensional Hilbert space \( \mathcal{H}_A \otimes \mathcal{H}_B \) and \( \{P_k\} \) be a nontrivial OPLPM acting on any subsystem of \( \mathcal{S} \). If \( \mathcal{S} \) forms a complete orthonormal product basis of a subspace \( \mathcal{H}_{A_{\mathcal{S}}} \otimes \mathcal{H}_{B_{\mathcal{S}}} \subseteq \mathcal{H}_A \otimes \mathcal{H}_B \), then for each outcome \( k \), the nonzero post-measurement states form a proper subset of \( \mathcal{S} \).
	\end{prop}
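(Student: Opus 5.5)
The plan is to reduce the statement to Proposition~\ref{thm:OPLM-completeness} by restricting everything to the product subspace $\mathcal{H}_{A_{\mathcal{S}}}\otimes\mathcal{H}_{B_{\mathcal{S}}}$ on which $\mathcal{S}$ is complete.

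\textbf{Setup.} Let $Q_A$ and $Q_B$ be the orthogonal projectors onto $\mathcal{H}_{A_{\mathcal{S}}}$ and $\mathcal{H}_{B_{\mathcal{S}}}$. Every $|a_i\rangle$ lies in $\mathcal{H}_{A_{\mathcal{S}}}$ and every $|b_i\rangle$ lies in $\mathcal{H}_{B_{\mathcal{S}}}$. Without loss of generality Alice measures, with a nontrivial OPLPM $\{P_k\}$ on $\mathcal{H}_A$. The post-measurement states are $P_k|a_i\rangle\otimes|b_i\rangle$.

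\textbf{Step 1 (OP conditions see only the compression).} For $i\neq j$ the orthogonality-preserving condition reads
\[
\langle a_i|P_k|a_j\rangle\,\langle b_i|b_j\rangle=0 .
\]
It therefore involves only the matrix elements of the compressed operator $\tilde P_k:=Q_AP_kQ_A$ on $\mathcal{H}_{A_{\mathcal{S}}}$. So $\{\tilde P_k\}$ satisfies the same orthogonality constraints on $\mathcal{S}$, viewed now as a complete orthonormal product basis of $\mathcal{H}_{A_{\mathcal{S}}}\otimes\mathcal{H}_{B_{\mathcal{S}}}$.

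\textbf{Step 2 (apply the earlier proposition).} Having relabelled the ambient space as $\mathcal{H}_{A_{\mathcal{S}}}\otimes\mathcal{H}_{B_{\mathcal{S}}}$, I would invoke the argument of Proposition~\ref{thm:OPLM-completeness}. By completeness inside the subspace, each $\tilde P_k|a_i\rangle\otimes|b_i\rangle$ expands in $\mathcal{S}$. The OP constraints then rule out the appearance of a new product direction, through the same case split as there: $|b_i\rangle\perp|b_j\rangle$ versus $|b_i\rangle=|b_j\rangle$. The conclusion is that each $|a_i\rangle$ is either fixed by $\tilde P_k$ or annihilated by it.

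\textbf{Step 3 (lift back to $P_k$).} Since $P_k$ is a projector, $\|P_k|a_i\rangle\|^2=\langle a_i|\tilde P_k|a_i\rangle$. If this equals $1$, then $P_k|a_i\rangle=|a_i\rangle$; if it equals $0$, then $P_k|a_i\rangle=0$. Hence every nonzero post-measurement state is a member of $\mathcal{S}$. Nontriviality of the measurement then gives that, for each outcome, the surviving states form a proper subset.

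\textbf{Main obstacle.} The hard step is Step 2. The compressions $\tilde P_k$ form only a POVM on $\mathcal{H}_{A_{\mathcal{S}}}$, not necessarily a PVM, whereas Proposition~\ref{thm:OPLM-completeness} was argued for projectors. I would first try to show that the OP constraints, together with completeness of $\mathcal{S}$ in the subspace, force $\tilde P_k$ to be diagonal in a basis adapted to the $|a_i\rangle$ with eigenvalues in $\{0,1\}$. Concretely, the $|a_i\rangle$ sharing a common or nonorthogonal Bob part span blocks of $\mathcal{H}_{A_{\mathcal{S}}}$, and $\tilde P_k$ must be block-diagonal with respect to them. If that fails, the fallback is to argue directly that $P_k$ commutes with $Q_A$, so that $\tilde P_k$ is itself a projector. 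This is where the hypothesis that $\mathcal{S}$ spans a genuine tensor-product subspace, rather than an arbitrary subspace, must be used.
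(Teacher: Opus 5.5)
\textbf{There is a genuine gap, and it sits exactly at the obstacle you flagged.} Neither of your two ways through Step 2 can succeed, because when $P_k$ is a projector on all of $\mathcal{H}_A$ the statement is false.

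Your Step 1 is fine. Completeness in the subspace then yields, more directly than the case split, exactly this much. For $i\neq j$ we have $\langle\psi_i|\tilde P_k\otimes Q_B|\psi_j\rangle=\langle a_i|P_k|a_j\rangle\langle b_i|b_j\rangle=0$. So $\tilde P_k\otimes Q_B$ is diagonal in the orthonormal basis $\mathcal{S}$ of $\mathcal{H}_{A_{\mathcal{S}}}\otimes\mathcal{H}_{B_{\mathcal{S}}}$, hence $\tilde P_k|a_i\rangle=\lambda^{(k)}_i|a_i\rangle$, with no case split on Bob's parts needed. But nothing forces $\lambda^{(k)}_i\in\{0,1\}$.

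Here is a counterexample. Take $\mathcal{H}_A=\mathbb{C}^3$, $\mathcal{H}_B=\mathbb{C}^2$ and $\mathcal{S}=\{|i\rangle_A|j\rangle_B : i,j\in\{0,1\}\}$. This set is complete in $\mathrm{span}\{|0\rangle,|1\rangle\}\otimes\mathbb{C}^2$ and incomplete in $\mathbb{C}^3\otimes\mathbb{C}^2$. Let Alice measure $P_1=|v\rangle\langle v|$, $P_2=\mathbb{I}-P_1$, with $|v\rangle=(|0\rangle+|2\rangle)/\sqrt{2}$. Since $\langle 0|P_1|1\rangle=\langle 0|P_2|1\rangle=0$, this nontrivial PVM is orthogonality preserving on $\mathcal{S}$. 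Yet $\tilde P_1=\tfrac12|0\rangle\langle 0|$ and $\|P_1|0\rangle\|^2=1/2$. The post-measurement states $|v\rangle|b\rangle$ (outcome $1$) and $\tfrac{1}{\sqrt2}(|0\rangle-|2\rangle)|b\rangle$ (outcome $2$), $b=0,1$, are not in $\mathcal{S}$.

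So the conclusion of your Step 2 and the lift in Step 3 both fail. Moreover, $P_kQ_A=Q_AP_k$ cannot be extracted from the tensor-product shape of the subspace: in this example $P_1$ does not commute with $Q_A$.

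\textbf{The paper's proof does not derive this either.} It simply asserts that $P_k\otimes\mathbb{I}$ ``acts only on $\mathcal{H}_{A_{\mathcal{S}}}$'' and maps the subspace into itself. In other words, it takes your fallback as given. The correct repair is to make this a hypothesis: each $P_k$ is block diagonal with respect to $\mathcal{H}_{A_{\mathcal{S}}}\oplus\mathcal{H}_{A_{\mathcal{S}}}^{\perp}$, or equivalently the measurement is really a PVM on $\mathcal{H}_{A_{\mathcal{S}}}$. Then $\tilde P_k$ is a projector, $\lambda^{(k)}_i\in\{0,1\}$, and your Step 3 goes through verbatim.

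Even then, the word ``proper'' requires $\{\tilde P_k\}$ to be nontrivial on $\mathcal{H}_{A_{\mathcal{S}}}$, not merely on $\mathcal{H}_A$. In the example above, $P_1=Q_A$, $P_2=|2\rangle\langle 2|$ is nontrivial and orthogonality preserving, but outcome $1$ returns all of $\mathcal{S}$.
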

	
	\begin{proof}
		Since the projective measurement \( \{P_k\} \) is orthogonality-preserving on \( \mathcal{S} \), the post-measurement vectors \( (P_k \otimes \mathbb{I})|\psi_i\rangle \) (Alice's measurement) remain mutually orthogonal for all \( i \). Given that \( \mathcal{S} \) is a complete orthonormal product basis of the subspace \( \mathcal{H}_{A_{\mathcal{S}}} \otimes \mathcal{H}_{B_{\mathcal{S}}} \), any linear combination of its elements must remain within that subspace.
		
		The measurement operator (\( P_k \otimes \mathbb{I} \)) acts only on \( \mathcal{H}_{A_{\mathcal{S}}} \), and thus maps \( \mathcal{H}_{A_{\mathcal{S}}} \otimes \mathcal{H}_{B_{\mathcal{S}}} \) into itself. Therefore, the post-measurement states \( (P_k \otimes \mathbb{I})|\psi_i\rangle \in \mathcal{H}_{A_{\mathcal{S}}} \otimes \mathcal{H}_{B_{\mathcal{S}}} \) and are still product states, being projections of product states.
		
		Since \( \mathcal{S} \) spans the full product basis of that subspace, no new orthogonal product states exist outside \( \mathcal{S} \). Therefore, by Proposition \ref{thm:OPLM-completeness}, the nonzero post-measurement vectors must belong to \( \mathcal{S} \). Hence, no new product vectors are introduced.
	\end{proof}
	
	\begin{rem}
		The above proposition emphasizes that when a product basis is complete within a given subspace, OPLPM cannot induce any extension or evolution beyond that basis. This is crucial for understanding when activation effects can or cannot arise.
	\end{rem}
	
	This distinction between subspace--complete vs. structurally--incomplete sets plays a pivotal role in determining the dynamical behavior under local measurements. Specifically, the absence of a tensor-product subspace structure enables the generation of novel product vectors, offering a clear pathway for activation via local PVM.

	\begin{cor}
        Let \( \mathcal{S} = \{ |\psi_i\rangle = |a_i\rangle \otimes |b_i\rangle \}_{i=1}^n \) be an incomplete orthogonal product basis in a finite-dimensional Hilbert space \( \mathcal{H}_A \otimes \mathcal{H}_B \) and \( \{P_k\} \) be a nontrivial OPLPM acting on any subsystem of \( \mathcal{S} \). If \( \mathcal{S} \) does not form a complete orthonormal product basis of any subspace \( \mathcal{H}_{A_{\mathcal{S}}} \otimes \mathcal{H}_{B_{\mathcal{S}}} \subseteq \mathcal{H}_A \otimes \mathcal{H}_B \), then there may exist at least one outcome \( k \), such that the nonzero post-measurement states do not form a proper subset of \( \mathcal{S} \).
	\end{cor}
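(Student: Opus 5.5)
Since the corollary asserts only that such an outcome \emph{may} exist, it is an existence claim. I will prove it by exhibiting one instance of the hypotheses for which some nontrivial OPLPM outcome produces a post-measurement state not proportional to any member of $\mathcal{S}$. The natural witness is the set $\mathcal{S}_2\subset\mathbf{C}^3\otimes\mathbf{C}^6$ already introduced, together with Bob's measurement $\mathcal{K}_B=\{\mathcal{K}^B_1,\mathcal{K}^B_2\}$.

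\textbf{Step 1: the hypotheses hold for $\mathcal{S}_2$.} Incompleteness is immediate, since $\mathcal{S}_2$ has $10<18$ elements. The substantive point is that $\mathcal{S}_2$ is not a complete orthonormal product basis of any subspace $\mathcal{H}_{A_{\mathcal{S}}}\otimes\mathcal{H}_{B_{\mathcal{S}}}$.
\begin{itemize}
\item Any such subspace must contain every state of $\mathcal{S}_2$. So $\mathcal{H}_{A_{\mathcal{S}}}$ contains the span of Alice's local vectors, which already includes $|0\rangle$, $|2\rangle$ and $|0-1\rangle$; hence $\mathcal{H}_{A_{\mathcal{S}}}=\mathbf{C}^3$.
\item Likewise, Bob's local vectors $|0-4\rangle$, $|2-3\rangle$, $|0-1+4-5\rangle$, $|1-2+5-3\rangle$, the sum of all six basis vectors, and the minus-partners of the listed states span a space of dimension at least $4$.
\item Therefore $\dim(\mathcal{H}_{A_{\mathcal{S}}}\otimes\mathcal{H}_{B_{\mathcal{S}}})\ge 12>10$, so $\mathcal{S}_2$ cannot be complete in it.
\end{itemize}
To make this airtight I will compute the rank of Bob's local vectors explicitly, using the $\pm$ partners implicit in the ten-element set.

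\textbf{Step 2: $\mathcal{K}_B$ is nontrivial and orthogonality preserving.} Nontriviality is clear, since $\mathcal{K}^B_1$ is a rank-$3$ projector on $\mathbf{C}^6$ and is not proportional to $\mathbb{I}$. For orthogonality preservation, take each pair of states $|a_i\rangle|b_i\rangle$, $|a_j\rangle|b_j\rangle$ and each outcome $m$. I will check that
\[
\langle a_i|a_j\rangle\,\langle b_i|\mathcal{K}^B_m|b_j\rangle=0 .
\]
Whenever Alice's parts are orthogonal the condition holds automatically. So only pairs with nonorthogonal Alice parts need Bob's projected overlap computed. This is a finite calculation, and the displayed post-measurement list for $\mathcal{K}^B_1$ already exhibits it as orthogonal. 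The outcome $\mathcal{K}^B_2$ is handled the same way.

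\textbf{Step 3: a new direction appears.} Under $\mathcal{K}^B_1$, the state $|0\rangle_A|0-1+4-5\rangle_B$ becomes $|0\rangle_A|0-1\rangle_B$. I will show this is not proportional to any element of $\mathcal{S}_2$ by comparing supports: the only states of $\mathcal{S}_2$ whose Alice part is proportional to $|0\rangle$ have Bob parts supported on $\{0,1,4,5\}$, so none equals $|0-1\rangle_B$ up to scale. Hence the nonzero outcomes are not a subset of $\mathcal{S}_2$, which realizes the "may exist" claim.

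The main obstacle is Step 1, specifically excluding \emph{every} product subspace rather than only the obvious one. I will handle it by the minimality argument above: each local factor must contain the span of the corresponding local parts of $\mathcal{S}_2$, so a dimension count on the minimal subspace suffices.
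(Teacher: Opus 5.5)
Your proposal is correct, but it takes a different route from the paper. The paper's proof is generic. From the hypothesis it infers that one local family, say $\{|a_i\rangle\}$, is not a full orthonormal basis of any subsystem. It then argues that for an OPLPM on $A$ the vectors $P_k|a_i\rangle$ may lie outside $\mathrm{span}\{|a_i\rangle\}$, so $P_k|a_i\rangle\otimes|b_i\rangle$ may leave $\mathrm{span}(\mathcal{S})$. It never exhibits a measurement that is at once nontrivial, orthogonality preserving and direction-creating, so it is a plausibility argument rather than a verification.

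You instead read ``may exist'' as an existence claim and prove it with the explicit witness $(\mathcal{S}_2,\mathcal{K}_B)$, which the paper uses only to illustrate the earlier corollary. That reading is the right one: the statement cannot mean that every nontrivial OPLPM creates new directions. For $\{|0\rangle|0\rangle,|1\rangle|1\rangle\}\subset\mathbf{C}^2\otimes\mathbf{C}^2$, which satisfies the hypotheses, the measurement $\{|0\rangle\langle 0|,|1\rangle\langle 1|\}$ on $A$ yields only subsets.

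Your minimal-local-support argument is exactly what is needed to exclude every product subspace. The remaining checks go through:
\begin{itemize}
\item Under $\mathcal{K}^B_1$, the projected Bob overlaps vanish for the pairs $(1,4),(1,5),(2,3),(2,5),(3,4)$, which are the only pairs with nonorthogonal Alice parts; the same holds under $\mathcal{K}^B_2$.
\item $|0\rangle|0-1\rangle$ is not proportional to any member of $\mathcal{S}_2$.
\end{itemize}
Your route buys rigor and checkability. It gives up any statement about a general $\mathcal{S}$, which the paper's argument at least gestures at.

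One correction: $\mathcal{S}_2$ has five states, not ten. The ``$i=1$ to $10$'' in the main-text display is a typo; see Appendix B and the five-element post-measurement lists. There are no ``$\pm$ partners,'' so do not invoke them. A partner such as $|0\rangle|0-1-4+5\rangle$ is orthogonal to $|\phi_1\rangle$, yet $\mathcal{K}^B_1$ sends both to the same vector $|0\rangle|0-1\rangle$. With such partners, $\mathcal{K}_B$ would fail to be orthogonality preserving and your Step 2 would collapse. Step 1 closes without them. The five Bob vectors are linearly independent: the uniform vector is orthogonal to the other four, and those four are independent. Hence any admissible product subspace has dimension at least $3\cdot 5=15>5$.
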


	\begin{proof}
		Let \( \mathcal{S} = \{ |\psi_i\rangle = |a_i\rangle \otimes |b_i\rangle \}_{i=1}^n \in \mathcal{H}_A \otimes \mathcal{H}_B \) be an incomplete orthogonal product basis that does \emph{not} form a complete orthonormal product basis of any subspace \( \mathcal{H}_{A_{\mathcal{S}}} \otimes \mathcal{H}_{B_{\mathcal{S}}} \subseteq \mathcal{H}_A \otimes \mathcal{H}_B \). Then, at least one of \( \{ |a_i\rangle \} \) and \( \{ |b_i\rangle \} \) does not form a full orthonormal basis for any subsystem, say, it is subsystem $A$.
		
		Now, consider an orthogonality-preserving local projective measurement \( \{P_k\} \) on subsystem \( A \). For each measurement outcome \( k \), the post-measurement states are
		\(
		|\phi_i^{(k)}\rangle = (P_k \otimes \mathbb{I}) |\psi_i\rangle.
		\)
		Since \( P_k \) acts only on \( A \), the new local component on Alice's side becomes \( P_k |a_i\rangle \), which may lie \emph{outside} the span of \( \{ |a_i\rangle \} \). Because the original \( \{ |a_i\rangle \} \) do not form a full orthonormal basis of any subspace, the resulting vectors \( P_k |a_i\rangle \otimes |b_i\rangle \) may no longer lie within \( \mathrm{span}(\mathcal{S}) \), and new product states may emerge.
	\end{proof}
	
	\begin{rem}
		The key structural distinction lies in whether the original product basis \( \mathcal{S} \) forms a complete basis for a subspace. If it does, the evolution remains closed within \( \mathcal{S} \) under OPLPM. Otherwise, such evolution may extend beyond \( \mathrm{span}(\mathcal{S}) \), generating new product vectors.
	\end{rem}

	This behavior illustrates a key difference between complete and incomplete sets: the lack of completeness allows new orthogonal product directions to arise under local measurements. To capture this transformation formally, we introduce the notion of local activability.
	\begin{defin}
		A locally distinguishable set $\mathcal{S}$ of multipartite orthogonal states is said to be locally activable if it can be transformed to a set of locally indistinguishable orthogonal states via OPLM.
	\end{defin}
	\begin{theorem}	\label{thm:LPCC-non-activation}
		Let \( \mathbf{\mathcal{S}} \) be a complete orthogonal product basis of \( \mathcal{H}_A \otimes \mathcal{H}_B \) which is initially distinguishable by LOCC. Then \( \mathbf{\mathcal{S}} \) cannot be made locally indistinguishable via any LPCC protocol. In other words, \( \mathcal{S} \) is not activable under LPCC.
	\end{theorem}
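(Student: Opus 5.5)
The plan is an induction on the number of rounds of the LPCC protocol, with Proposition~\ref{thm:OPLM-completeness} supplying the inductive step. The invariant is: \emph{at every node of the protocol tree, the surviving nonzero states form a subset of $\mathcal{S}$ (up to normalisation).} For an initially LOCC-distinguishable set, any subset of it is again LOCC-distinguishable, because we can simply run the original discriminating protocol on the subset. So if the invariant holds, no branch of an orthogonality-preserving LPCC protocol can ever produce a locally indistinguishable set, and $\mathcal{S}$ is not activable.

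\textbf{Step 1 (one round).} Take the first nontrivial OPLPM $\{P_k\}$, performed by Alice or Bob. By symmetry of the argument we may assume it is Alice. Proposition~\ref{thm:OPLM-completeness} gives that for each outcome $k$ the nonzero post-measurement states $(P_k\otimes\mathbb{I})|\psi_i\rangle$ are elements of $\mathcal{S}$. I would also record the concrete mechanism behind this, since the later steps use it. Completeness together with orthogonality preservation forces each $|a_i\rangle$ to be either contained in $\operatorname{supp}(P_k)$ or orthogonal to it. The reason is that any $|\psi_i\rangle$ which is only partially projected would produce a vector $P_k|a_i\rangle\otimes|b_i\rangle$. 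This vector is orthogonal to all the other post-measurement states, but because $\mathcal{S}$ is a basis it must still lie in the span of $\mathcal{S}$, which forces it to be proportional to $|\psi_i\rangle$.

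\textbf{Step 2 (subsequent rounds).} After outcome $k$, the surviving set $\mathcal{S}_k\subseteq\mathcal{S}$ need not be complete in $\mathcal{H}_A\otimes\mathcal{H}_B$. I will argue that it is a complete product basis of the restricted space $\operatorname{supp}(P_k)\otimes\mathcal{H}_B$. It spans this space because it equals $(P_k\otimes\mathbb{I})\mathcal{S}$ with the zero vectors removed, and the image of a basis under $P_k\otimes\mathbb{I}$ spans $\operatorname{supp}(P_k)\otimes\mathcal{H}_B$. Hence the subspace version of Proposition~\ref{thm:OPLM-completeness} (the second proposition above) applies to the next measurement. The next measurement may be by Bob, or by Alice refining within $\operatorname{supp}(P_k)$; measurements acting outside this support act trivially. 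Iterating, at every node the surviving states form a complete product basis of some product subspace $V_A\otimes V_B$ and are a subset of $\mathcal{S}$.

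\textbf{Step 3 (conclusion).} Every branch of the protocol therefore yields a subset of $\mathcal{S}$. Each such subset is LOCC-distinguishable by restricting the original protocol, which contradicts activation.

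The main obstacle is Step 2: making rigorous that the subset surviving after conditioning on classical outcomes is again "complete in a product subspace." This is what licenses repeated use of the propositions. It is also where general measurements by Bob on a mixture of Alice-branches might seem problematic, but each branch is treated separately via classical communication. I would handle it by checking the spanning claim directly as above, rather than relying on the informal wording of the second proposition.
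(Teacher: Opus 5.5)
Your skeleton matches the paper's: Proposition~\ref{thm:OPLM-completeness} gives ``survivors $\subseteq\mathcal{S}$'', and subsets of an LOCC-distinguishable set stay distinguishable. You are also right that the paper never justifies the later rounds. But your Step~2 does not close that gap. The dichotomy ``Alice refines within $\operatorname{supp}(P_k)$ or acts trivially'' is not exhaustive, and the invariant ``survivors $\subseteq\mathcal{S}$'' fails once a party measures a second time.

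Take $\mathcal{S}$ to be the computational basis of $\mathbb{C}^3\otimes\mathbb{C}^2$. Alice first measures $\{|0\rangle\langle 0|+|1\rangle\langle 1|,\,|2\rangle\langle 2|\}$. On the first outcome the survivors $|00\rangle,|01\rangle,|10\rangle,|11\rangle$ form a complete basis of $\mathrm{span}\{|0\rangle,|1\rangle\}\otimes\mathbb{C}^2$. She then measures $\{P,\mathbb{I}-P\}$ with $P=|0\rangle\langle 0|+|v\rangle\langle v|$ and $|v\rangle=(|1\rangle+|2\rangle)/\sqrt{2}$. Since $\langle 0|P|1\rangle=0$, this measurement is orthogonality preserving. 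Its compression to the current support is $\mathrm{diag}(1,\tfrac12)$, so it is genuinely nontrivial. Yet it sends $|1\rangle|b\rangle$ to $\tfrac{1}{\sqrt2}|v\rangle|b\rangle$, which is not proportional to any element of $\mathcal{S}$.

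The same example refutes the subspace proposition you invoke. Its proof silently assumes that $P_k\otimes\mathbb{I}$ maps $\mathcal{H}_{A_{\mathcal{S}}}\otimes\mathcal{H}_{B_{\mathcal{S}}}$ into itself. There is also a smaller point in Step~1: ``lies in the span of $\mathcal{S}$'' is vacuous, since $\mathcal{S}$ spans everything. What actually forces proportionality is the identity $\langle\psi_j|(P_k\otimes\mathbb{I})|\psi_i\rangle=\langle(P_k\otimes\mathbb{I})\psi_j|(P_k\otimes\mathbb{I})\psi_i\rangle=0$ for $j\neq i$. This makes $(P_k\otimes\mathbb{I})|\psi_i\rangle$ orthogonal to every other basis vector.

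The theorem survives if you weaken the invariant to ``a local-isometric image of a subset of $\mathcal{S}$''. Suppose the current states are $|\phi_i\rangle=(W_A\otimes W_B)|\psi_i\rangle$ for $i\in I$. Here $\{|\psi_i\rangle\}_{i\in I}\subseteq\mathcal{S}$ is an orthonormal basis of some $V_A\otimes V_B$, and $W_A,W_B$ are isometric on $V_A,V_B$.
\begin{itemize}
\item For Alice's next projector $P$, let $\Pi$ be the projector onto $W_AV_A$ and set $E=\Pi P\Pi$.
\item All $|\phi_i\rangle$ lie in $W_AV_A\otimes W_BV_B$, so the identity above turns orthogonality preservation into $\langle\phi_j|(E\otimes\mathbb{I})|\phi_i\rangle=0$ for $j\neq i$.
\item $E\otimes\mathbb{I}$ preserves that space and $\{|\phi_i\rangle\}$ is a basis of it. Hence $E\,W_A|a_i\rangle=\lambda_i W_A|a_i\rangle$.
\item The polar decomposition $P\Pi=U\sqrt{E}$, with $U$ isometric on $\operatorname{supp}E$, gives $(P\otimes\mathbb{I})|\phi_i\rangle=\sqrt{\lambda_i}\,(UW_A\otimes W_B)|\psi_i\rangle$.
\item The survivors ($\lambda_i>0$) form an orthonormal basis of $V_A'\otimes V_B$, where $W_AV_A'=\operatorname{supp}E$.
\end{itemize}
So the invariant propagates, and the argument for Bob is symmetric. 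Local isometries can be undone by local unitaries, so every node is LOCC-equivalent to a subset of $\mathcal{S}$ and is therefore distinguishable. This argument uses only that $M^\dagger M$ is diagonal on the surviving states. It therefore also covers arbitrary orthogonality-preserving local POVMs, not just projective ones.
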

	\begin{proof}
		The result follows directly from Proposition \ref{thm:OPLM-completeness}. Suppose \( \mathbf{\mathcal{S}} \) is a complete orthogonal product basis which is initially distinguishable. Any local PVM on one subsystem, say, Alice's side will map each state in \( \mathbf{\mathcal{S}} \) to either itself or zero, as shown in Proposition \ref{thm:OPLM-completeness}. Thus, the post-measurement set is a subset of \( \mathbf{\mathcal{S}} \), containing only those states that survive the projection.
		
		Now, if possible let us assume that \( \mathbf{\mathcal{S}} \) is activable by LPCC. Then there exists a sequence of local PVM with classical communication such that, after some step, the remaining post-measurement states become locally indistinguishable. But this implies that some subset of \( \mathbf{\mathcal{S}} \) is locally indistinguishable.
		
		However, this contradicts the assumption that \( \mathbf{\mathcal{S}} \) is initially distinguishable and no subset of a locally distinguishable set of orthogonal states can become locally indistinguishable.
	\end{proof}   
    See Appendix \ref{appendixA} for the complete description. In contrast, the set \( \mathcal{S}_2 \) (see Fig.~\ref{f44-3}), defined on the same Hilbert space, is also initially distinguishable but incomplete. According to Theorem \ref{thm:LPCC-non-activation}, such a set may become activable. 
	
	The non-activability result for complete orthogonal product bases under LPCC in bipartite systems naturally extends to multipartite settings. In a multipartite scenario, OPLPM still acts independently on each subsystem without introducing entanglement or new product directions. Consequently, the structural constraints imposed by completeness and product form continue to prevent the emergence of local indistinguishability through LPCC. We now formalize this extension in the following theorem.
	\begin{theorem}
		\label{thm:multipartite-non-activation}
		Let \( \mathcal{S} \) be a complete orthogonal product basis in the multipartite Hilbert space \( \mathcal{H} = \mathcal{H}_1 \otimes \mathcal{H}_2 \otimes \cdots \otimes \mathcal{H}_n \), where \( n \geq 2 \). If \( \mathcal{S} \) is initially distinguishable by LOCC, then \( \mathcal{S} \) cannot be made locally indistinguishable via any LPCC protocol. That is, \( \mathcal{S} \) is not activable under LPCC.
	\end{theorem}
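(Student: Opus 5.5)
The plan is to reduce the multipartite statement to the bipartite machinery already developed, namely Proposition~\ref{thm:OPLM-completeness} and Theorem~\ref{thm:LPCC-non-activation}, and then run an induction on the number of measurement rounds in an arbitrary LPCC protocol.

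\textbf{Step 1: a single round.} Write every element of \( \mathcal{S} \) as \( |\psi_i\rangle = |a_i^{(1)}\rangle\otimes\cdots\otimes|a_i^{(n)}\rangle \). Suppose party \( j \) performs a nontrivial OPLPM \( \{P_k\} \) on \( \mathcal{H}_j \). Group the parties into the bipartition \( \mathcal{H}_j \,|\, \bigotimes_{l\neq j}\mathcal{H}_l \). Under this grouping \( \mathcal{S} \) is still a complete orthonormal product basis, because every fully product vector is in particular product across the cut, and the cardinality \( \prod_l d_l \) still equals the total dimension. Proposition~\ref{thm:OPLM-completeness} then applies directly: for each outcome \( k \), every \( (P_k\otimes\mathbb{I})|\psi_i\rangle \) is either \( 0 \) or \( |\psi_i\rangle \) itself. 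Hence the surviving set is a subset \( \mathcal{S}_k\subseteq\mathcal{S} \).

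\textbf{Step 2: induction over rounds.} After the first round the parties hold \( \mathcal{S}_k \). This set is no longer a complete basis of \( \mathcal{H} \), so Step 1 cannot be reapplied verbatim, and I expect this to be the main obstacle. My first attempt is the following observation: a later projector \( Q \) acting on \( \mathcal{S}_k \) is also a local projector on the original space \( \mathcal{H}_j \). Composing projectors, each surviving vector has the form \( (Q P_k\otimes\mathbb{I})|\psi_i\rangle \). To control such vectors, I will use a convenient property of complete orthogonal product bases: for each party \( j \), the local vectors \( |a_i^{(j)}\rangle \) appearing in \( \mathcal{S} \) are pairwise either equal up to phase or orthogonal. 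I would derive this from Proposition~\ref{thm:OPLM-completeness} applied to the rank-one projector onto \( |a_i^{(j)}\rangle \) and its complement. Given this property, every projector that preserves orthogonality of the relevant states acts on each local vector as \( 0 \) or as the identity. Iterating the argument therefore keeps every post-measurement set inside \( \mathcal{S} \). If establishing that structural property in full generality fails, the fallback is to assume the orthogonality-preserving condition at each round relative to the current set, and to invoke the subspace version (the second Proposition) whenever \( \mathcal{S}_k \) spans a product subspace.

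\textbf{Step 3: conclusion.} By induction, every branch of any LPCC protocol ends with a subset \( \mathcal{S}'\subseteq\mathcal{S} \). If \( \mathcal{S}' \) were locally indistinguishable, we would obtain a contradiction: an LOCC protocol distinguishing \( \mathcal{S} \) also distinguishes any subset of it, simply by running the same protocol, since the states outside the subset never occur. So no branch of the protocol can produce a locally indistinguishable set, and \( \mathcal{S} \) is not activable under LPCC. As a remark, Step 1 works for any grouping of the parties, so the same argument immediately gives non-activability across all bipartitions, which is the strong locality property discussed in the introduction.
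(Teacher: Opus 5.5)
\textbf{There is a genuine gap in Step 2.} Your Steps 1 and 3 are exactly the paper's argument. The paper applies Proposition~\ref{thm:OPLM-completeness} to a single local PVM. It then uses the fact that a subset of an LOCC-distinguishable set is LOCC-distinguishable, and simply asserts that later rounds also keep the states inside $\mathcal{S}$. You are right that this assertion is the real issue, but your Step 2 does not resolve it.

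\textbf{Your structural property is false.} In the paper's own $\mathcal{S}_1$, Alice's local vectors $|1\rangle$ and $(|1\rangle+|2\rangle)/\sqrt{2}$ are neither parallel nor orthogonal. The same happens in the basis $\{|0\rangle|0\rangle,|1\rangle|0\rangle,|+\rangle|1\rangle,|-\rangle|1\rangle\}$ of $\mathbf{C}^2\otimes\mathbf{C}^2$. Your derivation breaks because the rank-one projector onto $|a_i^{(j)}\rangle$ need not be orthogonality preserving. In the $2\otimes 2$ example, $|0\rangle\langle 0|_A$ sends $|+\rangle|1\rangle$ and $|-\rangle|1\rangle$ to the same vector, so Proposition~\ref{thm:OPLM-completeness} cannot be invoked.

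\textbf{The goal of Step 2 is itself false,} even when your property holds. Take the computational basis of $\mathbf{C}^2\otimes\mathbf{C}^2$. After Alice's outcome $|0\rangle\langle 0|$, the current set is $\{|0\rangle|0\rangle,|0\rangle|1\rangle\}$. Alice's nontrivial PVM $\{|+\rangle\langle +|,|-\rangle\langle -|\}$ is orthogonality preserving on this set, because Bob's parts are orthogonal. Yet on the first outcome it outputs $|+\rangle|0\rangle,|+\rangle|1\rangle\notin\mathcal{S}$. So no projector claim of the form ``acts as $0$ or identity on each local vector'' can hold in later rounds.

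\textbf{The fallback fails too.} The same example refutes it: $\{|0\rangle|0\rangle,|0\rangle|1\rangle\}$ is a complete product basis of $\mathrm{span}\{|0\rangle\}\otimes\mathbf{C}^2$, yet the measurement produces new vectors. The paper's second (subspace) Proposition fails here because its proof tacitly assumes $P_k$ maps $\mathcal{H}_{A_{\mathcal{S}}}$ into itself.

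\textbf{The missing idea is a weaker invariant.} Require only that the current set be $(L_1\otimes\cdots\otimes L_n)(T)$. Here $T\subseteq\mathcal{S}$ is an orthonormal basis of a product subspace $W=V_1\otimes\cdots\otimes V_n$, and each $L_l$ is an isometry $V_l\to\mathcal{H}_l$.

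\emph{Propagation step.} Suppose party $j$ applies a projector (or any Kraus operator) $K$ that is orthogonality preserving on the current set. Put $K'=KL_j$ and $M=(K')^\dagger K'$, an operator on $V_j$. Since each $L_l$ is an isometry, the orthogonality-preserving condition says exactly that $M\otimes\mathbb{I}$ has zero off-diagonal entries in the orthonormal basis $T$ of $W$. Hence it is diagonal there, and $M|a_i^{(j)}\rangle=\mu_i|a_i^{(j)}\rangle$. Write the polar decomposition $K'=U\sqrt{M}$. The post-measurement states are then $\sqrt{\mu_i}\,(L_1\otimes\cdots\otimes U\otimes\cdots\otimes L_n)|\psi_i\rangle$. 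The survivors ($\mu_i>0$) form an orthonormal basis of the product subspace $\operatorname{range}(M)\otimes\bigotimes_{l\neq j}V_l$, and $U$ is isometric on $\operatorname{range}(M)$. So the invariant propagates.

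\emph{First round.} For $W=\mathcal{H}$ and $K=P_k$ this is Proposition~\ref{thm:OPLM-completeness}, with a one-line proof: $P_k\otimes\mathbb{I}$ is diagonal in $\mathcal{S}$ with eigenvalues $0$ or $1$.

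\emph{Conclusion.} Each $L_l$ extends to a local unitary that party $l$ can undo. Hence the set in every branch, at every stage, is LOCC distinguishable, because $T\subseteq\mathcal{S}$ is. Your Step 3 then goes through with ``subset of $\mathcal{S}$'' replaced by ``local-unitary image of a subset of $\mathcal{S}$''.
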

	
	\begin{proof} The proof is a natural extension of Theorem \ref{thm:LPCC-non-activation} to multipartite systems. Suppose \( \mathcal{S} \) is a complete orthogonal product basis of the multipartite space \( \mathcal{H}_1 \otimes \cdots \otimes \mathcal{H}_n \), and assume that it is initially distinguishable. In any LPCC protocol, local PVM are performed on one subsystem at a time. For a complete product basis, each local PVM either eliminates a state  or preserves it, without altering the product structure or introducing new directions. Thus, after each round of local measurement and classical communication, the post-measurement states remain a subset of the original set \( \mathcal{S} \). Now, assume for contradiction that \( \mathcal{S} \) is activable. Then there exists a sequence of local PVM and classical communication that transforms \( \mathcal{S} \) into a locally indistinguishable subset. But this would imply that some subset of \( \mathcal{S} \) is locally indistinguishable, contradicting the assumption that \( \mathcal{S} \) is initially distinguishable. \end{proof}
	
	While Theorem \ref{thm:multipartite-non-activation} addresses non-activability in the full multipartite setting, it is also important to consider the behavior of such sets under bipartitions of the system. This motivates the following definition, which captures a stronger form of local invariance.
	
	\begin{defin}
		\label{definationstronglocal}
		Consider a composite quantum system \( \mathcal{H} = \bigotimes_{i=1}^n \mathcal{H}_i \), with \( n \geq 3 \) and \( \dim \mathcal{H}_i \geq 3 \) for all \( i = 1, \dots, n \). A set of orthogonal product states \( \{ |\psi_i\rangle = |\alpha_i\rangle_1 \otimes |\beta_i\rangle_2 \otimes \cdots \otimes |\gamma_i\rangle_n \} \in \mathcal{H} \) is said to be \emph{strongly local} if it is not activable under LOCC in any possible bipartition of the parties.
	\end{defin} 
	
	The above definition formalizes the notion of strong local behavior by requiring local distinguishability to be preserved across all bipartitions. We now show that every initially distinguishable complete orthogonal product basis satisfies this property.
	
	\begin{theorem}
		\label{thm:strongly-local}
		Let \( \mathcal{S} \in \mathcal{H}_1 \otimes \mathcal{H}_2 \otimes \cdots \otimes \mathcal{H}_n \) be a complete orthogonal product basis which is initially distinguishable by LOCC. Then \( \mathcal{S} \) is strongly local. In other words, \( \mathcal{S} \) is not activable under LPCC in any bipartition of the subsystems.
	\end{theorem}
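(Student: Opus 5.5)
The plan is to reduce Theorem~\ref{thm:strongly-local} to the bipartite case, Theorem~\ref{thm:LPCC-non-activation}, one bipartition at a time. Fix any bipartition of the parties $\{1,\dots,n\}$ into nonempty disjoint sets $X$ and $Y$. Set $\mathcal{H}_X=\bigotimes_{i\in X}\mathcal{H}_i$ and $\mathcal{H}_Y=\bigotimes_{i\in Y}\mathcal{H}_i$, so that $\mathcal{H}\cong\mathcal{H}_X\otimes\mathcal{H}_Y$. Each $|\psi_i\rangle\in\mathcal{S}$ is fully product, so in particular it is a bipartite product state $|x_i\rangle\otimes|y_i\rangle$. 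Here $|x_i\rangle$ is the tensor product of the local factors belonging to the parties in $X$, and $|y_i\rangle$ is defined the same way for $Y$. Orthogonality and the cardinality $|\mathcal{S}|=\dim\mathcal{H}=\dim\mathcal{H}_X\cdot\dim\mathcal{H}_Y$ do not depend on how the parties are grouped. Hence $\mathcal{S}$ is a complete orthogonal product basis of $\mathcal{H}_X\otimes\mathcal{H}_Y$ in the bipartite sense.

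The second step is to check that $\mathcal{S}$ is LOCC distinguishable across the cut $X|Y$. Any LOCC protocol among the $n$ parties is also an LOCC protocol for two parties who each hold the joint systems of their group. Regrouping only enlarges the set of allowed operations, so the initial distinguishability of $\mathcal{S}$ carries over to the bipartite setting. The hypotheses of Theorem~\ref{thm:LPCC-non-activation} therefore hold for the pair $(\mathcal{H}_X,\mathcal{H}_Y)$.

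Theorem~\ref{thm:LPCC-non-activation} then says that no OP-LPCC protocol across $X|Y$ makes $\mathcal{S}$ locally indistinguishable. The mechanism is Proposition~\ref{thm:OPLM-completeness} applied with the projector $P_k$ acting on the composite system $\mathcal{H}_X$. Each OP projective measurement sends $\mathcal{S}$ to a subset of itself, so after any number of rounds the surviving states form a subset of $\mathcal{S}$. A subset of a distinguishable set is still distinguishable, whether we use bipartite or multipartite LOCC: the protocol that discriminates the whole set also discriminates any subset. So in no branch of the protocol does the post-measurement ensemble become indistinguishable. Since the bipartition was arbitrary, $\mathcal{S}$ satisfies Definition~\ref{definationstronglocal}.

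The step I expect to be the main obstacle is applying Proposition~\ref{thm:OPLM-completeness} when a group holds several parties. There the projector $P_k$ on $\mathcal{H}_X$ may be entangled across the parties in $X$, and the set $\{|x_i\rangle\}$ may contain repeated or non-orthogonal vectors. I therefore need the proposition's conclusion in genuinely bipartite form. That is, I need that any OP projective measurement on one side of a complete orthogonal product basis maps every state either to zero or to itself. I would argue this directly. Suppose $P_k\otimes\mathbb{I}$ preserves orthogonality on a complete basis. Expanding $(P_k\otimes\mathbb{I})|\psi_i\rangle$ in the basis $\mathcal{S}$ and using orthogonality of the images, one should get that $P_k$ acts block-diagonally with respect to the classes of equal (up to phase) $|y_i\rangle$. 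Within each class, the $|x_i\rangle$ for a fixed $|y\rangle$ should span their own subspace invariant under $P_k$. This is the natural fix for the gap in case~(ii) of the proposition's proof. If that direct argument proves hard, I would instead cite Proposition~\ref{thm:OPLM-completeness} as stated, since its statement does not restrict the dimension or structure of the subsystem being measured.
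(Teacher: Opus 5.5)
As a reduction, your argument is the paper's proof. You fix a cut $X|Y$, note that $\mathcal{S}$ is still a complete orthogonal product basis of $\mathcal{H}_X\otimes\mathcal{H}_Y$, and invoke Theorem~\ref{thm:LPCC-non-activation}. Your remark that $n$-party LOCC distinguishability implies distinguishability across the cut fills a step the paper leaves implicit. Where you go on to justify the mechanism yourself, though, you have put the difficulty in the wrong place. The composite side is harmless. Since $(P_k\otimes\mathbb{I})^\dagger(P_k\otimes\mathbb{I})=P_k\otimes\mathbb{I}$, orthogonality preservation says exactly that $\langle\psi_i|P_k\otimes\mathbb{I}|\psi_j\rangle=0$ for $i\neq j$. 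So the projector $P_k\otimes\mathbb{I}$ is diagonal in the orthonormal basis $\mathcal{S}$ of the whole space, and every $|\psi_i\rangle$ is an eigenvector with eigenvalue $0$ or $1$. Nothing about how $P_k$ sits across the parties in $X$, or about repetitions among the $|x_i\rangle$, is used. This also gives more than your block-diagonal sketch: eigenvectors, not just invariant subspaces.

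The real gap is the passage from one round to many: ``after any number of rounds the surviving states form a subset of $\mathcal{S}$''. The paper's proof of Theorem~\ref{thm:LPCC-non-activation} makes the same leap. After the first round the survivors no longer form a basis of $\mathcal{H}$, so the one-step fact cannot be reapplied, and the claim is false as stated. Take the computational basis of $\mathbf{C}^2\otimes\mathbf{C}^2$ and let Alice measure $\{|0\rangle\langle0|,|1\rangle\langle1|\}$. On outcome $0$, a second measurement $\{|+\rangle\langle+|,|-\rangle\langle-|\}$ by Alice preserves orthogonality of $|0\rangle|0\rangle,|0\rangle|1\rangle$ but produces $|+\rangle|0\rangle,|+\rangle|1\rangle\notin\mathcal{S}$.

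What survives induction is a weaker invariant: at every node the set equals $(U_X\otimes U_Y)\mathcal{S}'$, with $U_X,U_Y$ unitaries and $\mathcal{S}'\subseteq\mathcal{S}$ an orthonormal basis of a product subspace $A'\otimes B'$. For a step on $X$ with projector $P$, set $\tilde P=U_X^\dagger PU_X$ and $E=\Pi_{A'}\tilde P\Pi_{A'}$, where $\Pi_{A'}$ projects onto $A'$. Then $E\otimes\mathbb{I}$ maps $A'\otimes B'$ into itself and is diagonal in $\mathcal{S}'$, so $E|x_i\rangle=e_i|x_i\rangle$ and $\langle x_i|\tilde P|x_j\rangle=e_i\langle x_i|x_j\rangle=e_j\langle x_i|x_j\rangle$. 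Hence, for $e_i>0$, the vectors $\tilde P|x_i\rangle/\sqrt{e_i}$ have the same Gram matrix as the corresponding $|x_i\rangle$, so they are their image under a single unitary on $\mathcal{H}_X$. The survivors form a basis of $\operatorname{supp}(E)\otimes B'$, and steps on $Y$ are symmetric. Every post-measurement set is therefore locally unitarily equivalent to a subset of $\mathcal{S}$, hence LOCC-distinguishable across $X|Y$. With this invariant in place of ``subset of $\mathcal{S}$'', your reduction goes through.
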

	
	\begin{proof}
		Let \( \mathcal{S} \) be a complete orthogonal product basis in the multipartite space \( \mathcal{H}_1 \otimes \cdots \otimes \mathcal{H}_n \), and suppose that it is initially distinguishable.
		
		Consider any bipartition of the subsystems, say \( A | B \), where \( A \) and \( B \) are disjoint subsets of \( \{1, 2, \ldots, n\} \) and \( A \cup B = \{1, 2, \ldots, n\} \). Under this bipartition, \( \mathcal{S} \) remains a set of product states in \( \mathcal{H}_A \otimes \mathcal{H}_B \). Since \( \mathcal{S} \) is complete in the full space, it remains complete in any bipartition, and orthogonality is preserved.
		
		By Theorem \ref{thm:LPCC-non-activation}, a complete orthogonal product basis which is initially distinguishable cannot be made locally indistinguishable via LPCC in any bipartition. Therefore, \( \mathcal{S} \) is not activable under any such bipartition.
		
		Since the bipartition was arbitrary, it follows that \( \mathcal{S} \) is not activable in any possible bipartition. Hence, by Definition~\ref{definationstronglocal}, \( \mathcal{S} \) is strongly local.
	\end{proof} 
	The above findings reveal a fundamental structural constraint on the activation of hidden nonlocality. Specifically, the completeness of an orthogonal product basis inherently restricts any LPCC protocol from transforming the set into a locally indistinguishable one. Thus, only incomplete product bases can serve as potential candidates for such activation. This positions COPB as extreme, strongly local structures, offering the least utility from the application perspective of \emph{nonlocality without entanglement}.

	\textit{Discussion.-} In this work, we have systematically explored the interplay between local PVM and the distinguishability of orthogonal product states in both bipartite and multipartite quantum systems. Our main result establishes that any COPB that is initially distinguishable remains so under all OPLPM, and hence cannot be activated into a locally indistinguishable set via LPCC. This non-activability is shown to extend robustly to all bipartitions in multipartite systems, leading to the notion of \emph{strongly local} sets-COPB that are immune to activation under any partitioning of subsystems.\\
	In contrast, we demonstrate that incomplete orthogonal product sets can exhibit qualitatively different behavior. Such sets may evolve under OPLPM into new orthogonal states not contained in the original span, allowing for the emergence of local indistinguishability where none previously existed. This leads us to formally define the concept of \emph{local activability}, characterizing those initially distinguishable sets that can be transformed into irreducible or indistinguishable configurations by local operations.\\    
	Our findings underscore a fundamental structural distinction: completeness in product bases imposes a rigid constraint on local evolution, while incompleteness leaves room for activation phenomena. These insights deepen the understanding of the boundary between local and nonlocal behavior in quantum state discrimination, and open pathways for further investigation into catalytic and entanglement-assisted transformations, as well as applications to quantum data hiding and nonlocality without entanglement.\\

	\noindent{\emph{Acknowledgments} A. Bhunia, acknowledges the support of ANRF-NPDF (File no. PDF/2025/002762).

\appendix*
	\section{The set \( \mathcal{S}_1 \) is initially distinguishable and not activable under LPCC}
	\label{appendixA}
	Now, consider the set $\mathcal{S}_1 =\left\{\left|\phi_i\right\rangle_{A B}\right\}\in \mathbf{\mathbf{C}^3 \otimes \mathbf{C}^3}$ \cite{bhunia2023}, by, 
	\begin{multline}
	\mathbf{\mathcal{S}_1}=
	\begin{Bmatrix}
		{|\Psi_0\rangle_{A}|{\Phi}_{01}^{\pm}\rangle_{B},}\;
		{|\Psi_0\rangle_{A}|{\Phi}_{2}\rangle_{B},}\;
		{|{\Psi}_{12}^{\pm}\rangle_{A}|\Phi_0\rangle_{B},}\;\\
		{|\Psi_1\rangle_{A}|{\Phi}_{12}^{\pm}\rangle_{B},}\;
		{|{\Psi}_{2}\rangle_{A}|\Phi_1\rangle_{B},}\;
		{|{\Psi}_{2}\rangle_{A}|{\Phi}_{2}\rangle_{B}}\\
	\end{Bmatrix}
	\label{A1}
\end{multline}
	where,	${|{\Psi}_{ij}^{\pm}\rangle_{A}}=\left(\frac{|\mathbf{i}\rangle\pm |\mathbf{j}\rangle}{\sqrt{2}}\right)_{A},\;{|{\Phi}_{ij}^{\pm}\rangle_{B}}=\left(\frac{|\mathbf{i}\rangle\pm |\mathbf{j}\rangle}{\sqrt{2}}\right)_{B},$ and ${|{\Psi}_{k}\rangle_{A}}=|\mathbf{k}\rangle_{A},\;{|{\Phi}_{k}\rangle_{B}}=|\mathbf{k}\rangle_{B}$, see Fig.~\ref{fig44-5}.
	\begin{figure}[h!]
		\centering
		\includegraphics[width=0.38\textwidth]{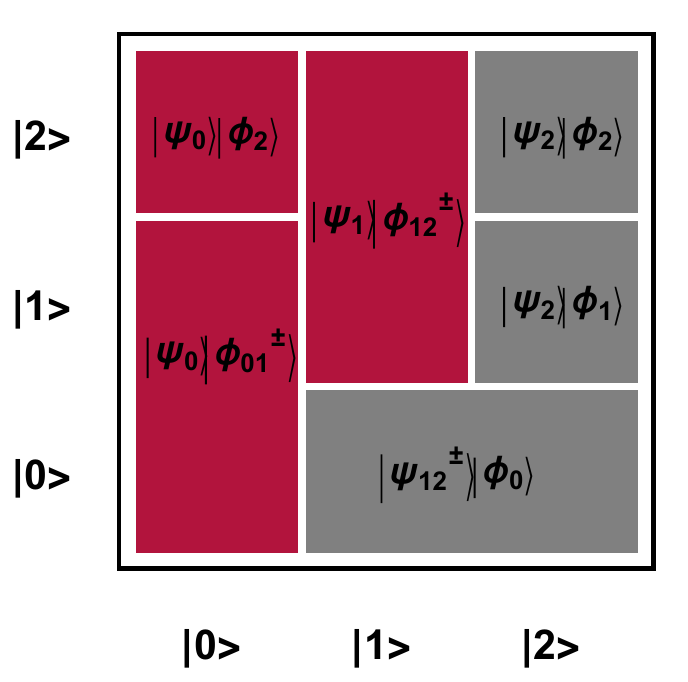}
		\caption{Representation of product states in ${\mathbb{C}}^{3}\otimes{\mathbb{C}}^{3}$. Tile indices correspond to consecutively ordered basis states of set $\mathcal{S}_{1}$, while tile colors indicate compatible measurement setups for both parties.}\label{fig44-5}
	\end{figure}
	\begin{theorem}
		The set $\mathcal{S}_1$ does not possess any activable nonlocality under LPCC.
	\end{theorem}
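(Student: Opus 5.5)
The argument has two parts. First I show directly that $\mathcal{S}_1$ is locally distinguishable. Then I show that no orthogonality-preserving LPCC step can produce a set outside $\mathcal{S}_1$, so Theorem~\ref{thm:LPCC-non-activation} applies.

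\textbf{Step 1: $\mathcal{S}_1$ is a complete orthonormal product basis.} Count the states: $2+1+2+2+1+1=9=\dim(\mathbf{C}^3\otimes\mathbf{C}^3)$. Orthogonality is checked pairwise from the tiling in Fig.~\ref{fig44-5}. Two states are orthogonal either because their Alice parts are orthogonal (e.g.\ $|\Psi_0\rangle$ versus $|\Psi^{\pm}_{12}\rangle$) or because their Bob parts are (e.g.\ $|\Psi_0\rangle|\Phi_2\rangle$ versus $|\Psi^{\pm}_{12}\rangle|\Phi_0\rangle$).

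\textbf{Step 2: an explicit LPCC protocol distinguishes $\mathcal{S}_1$.}
\begin{itemize}
\item Alice measures $\{P[|0\rangle_A],\,\mathbb{I}-P[|0\rangle_A]\}$. This is orthogonality preserving, since $|0\rangle_A$ is either contained in or orthogonal to every Alice component.
\item If outcome $|0\rangle$ clicks, the survivors are $|0\rangle|\Phi^{\pm}_{01}\rangle$ and $|0\rangle|2\rangle$. Bob separates these with the basis $\{|\Phi^{+}_{01}\rangle,|\Phi^{-}_{01}\rangle,|2\rangle\}$.
\item Otherwise the five states supported on $\mathrm{span}\{|1\rangle,|2\rangle\}_A$ remain. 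Bob measures $\{P[|0\rangle_B],\,\mathbb{I}-P[|0\rangle_B]\}$, which isolates $|\Psi^{\pm}_{12}\rangle|0\rangle$. Alice then separates the two $\pm$ states in the basis $|\Psi^{\pm}_{12}\rangle$.
\item On the complementary Bob outcome, the survivors are $|1\rangle|\Phi^{\pm}_{12}\rangle$, $|2\rangle|1\rangle$ and $|2\rangle|2\rangle$. Alice measures $\{|1\rangle,|2\rangle\}$. Bob then measures in $|\Phi^{\pm}_{12}\rangle$ if Alice got $|1\rangle$, or in $\{|1\rangle,|2\rangle\}$ if she got $|2\rangle$.
\end{itemize}
Every step is a local projective measurement that keeps the post-measurement states orthogonal, so $\mathcal{S}_1$ is distinguishable, indeed under LPCC.

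\textbf{Step 3: non-activability.} By Step~1, $\mathcal{S}_1$ is a complete orthonormal product basis, and by Step~2 it is initially distinguishable. So Theorem~\ref{thm:LPCC-non-activation}, through Proposition~\ref{thm:OPLM-completeness}, gives that every nontrivial OPLPM by either party maps each state of $\mathcal{S}_1$ to itself or to zero. Iterating over rounds, any set reachable by OP-LPCC is a subset of $\mathcal{S}_1$.

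It remains to show that each such subset is locally distinguishable. The protocol of Step~2 works for any subset: running it unchanged still separates the surviving states, since outcomes that catch no surviving state are simply never obtained. Hence no OP-LPCC protocol can turn $\mathcal{S}_1$ into a locally indistinguishable set.

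\textbf{Main obstacle.} The delicate point is relying on Proposition~\ref{thm:OPLM-completeness}, whose general proof is somewhat informal. For this specific set I would instead verify it directly, in three parts:
\begin{enumerate}
\item Orthogonality preservation forces Alice's projector to respect the block structure. The pairs $|0\rangle|\Phi^{\pm}_{01}\rangle$, $|1\rangle|\Phi^{\pm}_{12}\rangle$ and $|\Psi^{\pm}_{12}\rangle|0\rangle$ share nonorthogonal Bob parts with suitable partners.
\item Concretely, $|\Psi_0\rangle|\Phi^{+}_{01}\rangle$ and $|\Psi^{\pm}_{12}\rangle|\Phi_0\rangle$ have overlapping Bob parts. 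They therefore require $\langle 0|P_k|1\pm2\rangle=0$, which forces $P_k$ to be block diagonal in $\{|0\rangle\}\oplus\mathrm{span}\{|1\rangle,|2\rangle\}$.
\item Similar conditions pin down the structure within the $\{|1\rangle,|2\rangle\}$ block.
\end{enumerate}
After these checks, the post-measurement states are seen to be elements of $\mathcal{S}_1$ or zero. The same analysis is repeated for Bob's projectors by the symmetric structure of the tiling.
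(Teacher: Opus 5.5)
Your Steps 1 and 2 are correct. $\mathcal{S}_1$ is a complete orthonormal product basis, your protocol is an orthogonality-preserving LPCC protocol that separates all nine states, and the same protocol separates every subset.

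\textbf{The gap.} It lies in the sentence ``iterating over rounds, any set reachable by OP-LPCC is a subset of $\mathcal{S}_1$.'' Proposition~\ref{thm:OPLM-completeness}, and your direct check of it, both use that the \emph{current} set is complete in $\mathbf{C}^3\otimes\mathbf{C}^3$. After the first outcome only a proper subset survives, so the orthogonality constraints on the next measurement are weaker, and the claim is false. For instance, after Alice's outcome $P[|0\rangle_A]$ the survivors $|0\rangle|\Phi^{\pm}_{01}\rangle$, $|0\rangle|2\rangle$ have mutually orthogonal Bob parts. Every Alice measurement is therefore orthogonality preserving, and $\{P[|\Psi^{+}_{01}\rangle_A],\mathbb{I}-P[|\Psi^{+}_{01}\rangle_A]\}$ outputs states with Alice part $|\Psi^{\pm}_{01}\rangle$, which occurs nowhere in $\mathcal{S}_1$. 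Measurements that genuinely eliminate states do the same. On the branch $\{|1\rangle|\Phi^{\pm}_{12}\rangle,|2\rangle|1\rangle,|2\rangle|2\rangle\}$, Alice may project onto $\cos\theta|1\rangle+\sin\theta|0\rangle$ ($0<\theta<\pi/2$) and its complement. Citing Theorem~\ref{thm:LPCC-non-activation} does not close this, because its proof makes the same leap. A minor point: Bob's first-round constraints are not the mirror image of Alice's. They force his effect to be $\mu\mathbb{I}$, which is harmless but should be stated correctly.

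\textbf{The repair.} What is true, and what you need, is that every reachable set is a subset of $\mathcal{S}_1$ \emph{up to local isometries}. This requires an invariant preserved from round to round: the survivors always form a complete orthonormal product basis $\{|a_i\rangle|b_i\rangle\}$ of some product subspace $\mathcal{H}_{A'}\otimes\mathcal{H}_{B'}$.
Sandwich $\sum_j |a_j\rangle\langle a_j|\otimes|b_j\rangle\langle b_j| = P_{A'}\otimes P_{B'}$ with $|b_i\rangle$ to get $\sum_j|\langle b_i|b_j\rangle|^2|a_j\rangle\langle a_j|=P_{A'}$. Hence the $|a_j\rangle$ with $j\neq i$ and $\langle b_j|b_i\rangle\neq0$ span $\mathcal{H}_{A'}\ominus\mathrm{span}\{|a_i\rangle\}$. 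Orthogonality preservation for any projector $Q$ of Alice on $\mathcal{H}_A$ then forces $P_{A'}QP_{A'}|a_i\rangle=\lambda_i|a_i\rangle$.
The polar decomposition $QP_{A'}=V\sqrt{P_{A'}QP_{A'}}$ gives $Q|a_i\rangle=\sqrt{\lambda_i}\,V|a_i\rangle$. The survivors therefore form a complete product basis of $V\operatorname{supp}(P_{A'}QP_{A'})\otimes\mathcal{H}_{B'}$, so the invariant persists, and likewise for Bob.
By induction each reachable set is $(V_A\otimes V_B)$ applied to a subset of $\mathcal{S}_1$. Your Step 2 protocol, with its projectors conjugated by $V_A$ and $V_B$, distinguishes it.

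\textbf{Comparison with the paper.} The paper's own proof handles this differently. It works with $\mathbb{M}^\dagger\mathbb{M}$ rather than the Kraus operator, which discards the unitary part. It shows Alice's effect must be $\mathrm{diag}(\lambda_0,\lambda,\lambda)$ and Bob's trivial, then repeats this computation branch by branch, so only the block projectors act nontrivially.
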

	\begin{proof} 
		Suppose Alice goes first, and let 
		${\mathbb{M}_{A}^m}^\dagger\mathbb{M}_{A}^m=[m^a_{ij}]_{3\times3}$ denote any arbitrary POVM operator of Alice with outcome
		$m$ such that the postmeasurement states $\quad\left\{\mathbb{M}_{A}^m \otimes I_{B} \left|\Psi_{i}\right\rangle,\right.$
		$|\Psi_{i}\rangle\in \mathcal{S}_1\}$ should be mutually orthogonal. Because $m^a_{i j}=0$ is necessary and sufficient for $m^a_{j i}=0, i<j$, we will only show $m^a_{i j}=0, i<j$, in the following. Then, considering the states $	|\Psi_0\rangle_{A}|{\Phi}_{01}^{+}\rangle_{B}$ and $ |\Psi_1\rangle_{A}|{\Phi}_{12}^{+}\rangle_{B},$ we know $\left\langle 0\left|\mathbb{M}_{A}^m\right| 1\right\rangle_{A} \left\langle 0+1|1+2\right\rangle_{B}=0.$ Thus, $m^a_{01}=m^a_{10}=0 .$ In the same way, for the states $	
		|\Psi_0\rangle_{A}|{\Phi}_{2}\rangle_{B},$ and $
		|\Psi_2\rangle_{A}|{\Phi}_{2}\rangle_{B},$
		we can compute $m^a_{02}=m^a_{20}=0$. Similarly if we choose the states $|\Psi_1\rangle_{A}|{\Phi}_{12}^{+}\rangle_{B},$ and $
		|{\Psi}_{2}\rangle_{A}|\Phi_1\rangle_{B}\}$,
		we can show $m^a_{12}=m^a_{21}=0$. Therefore, ${\mathbb{M}_{A}^m}^\dagger\mathbb{M}_{A}^m$ is diagonal and ${\mathbb{M}_{A}^m}^\dagger\mathbb{M}_{A}^m=\operatorname{diag}\left(\lambda_{0}, \lambda_{1},\lambda_{2}\right)$.
		Now considering $|{\Psi}_{12}^{\pm}\rangle_{A}|\Phi_0\rangle_{A},$ we get $\quad\left\langle1+2\left|\mathbb{M}_{A}^m\right|1-2\right\rangle_{A}\langle 0|0\rangle_{B}=0, \quad$ i.e.,
		$\left\langle1\left|\mathbb{M}_{A}^m\right|1\right\rangle-\left\langle 2\left|\mathbb{M}_{A}^m\right|2\right\rangle=0$. Therefore, ${\mathbb{M}_{A}^m}^\dagger{\mathbb{M}_{A}^m}$
		$=\operatorname{diag}\left(\lambda_{0}, \lambda, \lambda\right) .$
		If possible let us assume that $\lambda_{0}\neq0$ and $\lambda\neq0$. Then after Alice's measurement, Bob should do a nontrivial operation on his own system according to Alice's result. We denote $\mathbb{M}_{B}^m$ as Bob's operator. As we discussed above, by choosing suitable pair of states we can conclude that all the off-diagonal element of ${\mathbb{M}_{B}^m}^\dagger\mathbb{M}_{B}^m$ is equal to 0. Similarly for the diagonal element as we have discussed above, if we take $	|\Psi_0\rangle_{A}|{\Phi}_{01}^{\pm}\rangle_{B}\; \text{and}\; |\Psi_1\rangle_{A}|{\Phi}_{12}^{\pm}\rangle_{B},$ we finally get $m^b_{00}=m^b_{11}=m^b_{22}.$ Therefore ${\mathbb{M}_{B}^m}^\dagger\mathbb{M}_{B}^m$ is propotional to the identity operator, i.e., ${\mathbb{M}_{B}^m}^\dagger\mathbb{M}_{B}^m=\alpha_{0}I$, which is trivial operator and this contradicts our assumption that the set $\mathcal{S}_1$ is initially distinguishable. So, either $\lambda_{0}=0$ or $\lambda=0 .$
		Notice that this result also suggests that these states cannot be distinguished if Bob goes first. Now it is clear that if Alice goes first with a diagonal operator i.e., $\lambda_{0}=\lambda=1$, then the above set of states cannot be distinguished. So, Alice has to do non-trivial measurement first and this only happens when any one of $\lambda_{0}$, $\lambda$ not equal to zero. For that Alice only has two outcome measurement operators: $\quad \mathbb{M}_{A}^1=\operatorname{diag}(1,0,0)$ and $\mathbb{M}_{A}^2$ = $I-\mathbb{M}_{A}^1$ $=\operatorname{diag}(0,1,1)$, see Fig.~\ref{fig44-5}. If $\mathbb{M}_{A}^1$ clicks, Bob is able to distinguish the left states by projecting onto $\left|0\pm1\right\rangle$ and $\left|2\right\rangle$. If $\mathbb{M}_{A}^2$ clicks, it isolates the remaining states. It is then Bob's turn to do measurement. Following the method we used above, we can similarly prove that Bob's measurement must be $\mathbb{M}_{B}^1=\operatorname{diag}(1,0,0)$ and $\mathbb{M}_{B}^2$ $=\operatorname{diag}(0,1,1)$. The process will repeat a finite number of times and for each measurement outcomes for both parties the set $\mathcal{S}_1$ transforms only to a distinguishable set. This implies the fact that if the set is distinguishable (local) then for all possible nontrivial measurements, it is impossible to transform the set into an indistinguishable one. In other words, the set $\mathcal{S}_1$ is not activable through LPCC. Hence we complete the proof.
	\end{proof}
	\section{The set \( \mathcal{S}_2 \) is initially distinguishable and activable under LPCC}
	\label{appendixB}
	Consider the set $\mathcal{S}_2 =\left\{\left|\phi_i\right\rangle_{A B}\right\}_{i=1}^{5}\in \mathbf{\mathbf{C}^3 \otimes \mathbf{C}^6}$ \cite{GhoshStrongActivation2022}, where,
	\begin{multline}
		\begin{aligned}
			& \left|\phi_1\right\rangle_{A B}=|\mathbf{0}\rangle_A|\mathbf{0}-\mathbf{1}+\mathbf{4}-\mathbf{5}\rangle_B \\
			& \left|\phi_2\right\rangle_{A B}=|\mathbf{2}\rangle_A|\mathbf{1}-\mathbf{2}+\mathbf{5}-\mathbf{3}\rangle_B \\
			& \left|\phi_3\right\rangle_{A B}=|\mathbf{1-2}\rangle_A|\mathbf{0}-\mathbf{4}\rangle_B \\
			& \left|\phi_4\right\rangle_{A B}=|\mathbf{0-1}\rangle_A|\mathbf{2}-\mathbf{3}\rangle_B \\
			& \left|\phi_5\right\rangle_{A B}=|\mathbf{0+1+2}\rangle_A|\mathbf{0}+\mathbf{1}+\mathbf{2}+\mathbf{3}+\mathbf{4}+\mathbf{5}\rangle_B 
		\end{aligned}
		\label{5}
	\end{multline}
	It is quite straightforward to show that the set $\mathcal{S}_2$ considered above is free from local redundancy \cite{bandhyopadhyay201,Li2022,subrata2024}. Here, Bob's system can be considered to be the composition of qubit and qutrit subsystems. Precisely, $|\mathbf{0}\rangle_B:=|00\rangle_{b_1 b_2},|\mathbf{1}\rangle_B:=|01\rangle_{b_1 b_2},|\mathbf{2}\rangle_B:=$ $|02\rangle_{b_1 b_2},|\mathbf{3}\rangle_B:=|10\rangle_{b_1 b_2},\left|\mathbf{4}_B\right\rangle:=|11\rangle_{b_1 b_2},|\mathbf{5}\rangle_B:=|12\rangle_{b_1 b_2}$. Take two states, $\left|\psi_3\right\rangle_{A B}$ and $\left|\psi_4\right\rangle_{A B}$. When any of the subparts (qubit or qutrit) of Bob's system for both states is discarded the reduced states will be nonorthogonal. Similar things happen for Alice also. This implies the set $\mathcal{S}_2$ is free from local redundancy.
	\par Now we will show that the set $\mathcal{S}_2$ is locally distinguishable. The players can avail the following discrimination protocol. First Bob performs a measurement $\mathbb{M}_B \equiv\left\{\mathbb{M}_B^1:=\right.$ $P\left[|\mathbf{0}-\mathbf{4}\rangle_B\right], \mathbb{M}_B^2 :=P\left[|\mathbf{2}-\mathbf{3}\rangle_B\right], \mathbb{M}_B^3:=P[\mid \mathbf{0}+\mathbf{1}+\mathbf{2}+\mathbf{3}+\mathbf{4}$ $\left.\left.+\mathbf{5}\rangle_B\right], \mathbb{M}_B^4:=\mathbb{I}-\left(\mathbb{M}_B^1+\mathbb{M}_B^2+\mathbb{M}_B^3\right)\right\}$. Here, $P\left[|\cdot\rangle\right]:=$ $|\cdot\rangle\langle\cdot|_{\mathcal{P}}$, and $\mathcal{P}$ denotes the party. When $\mathbb{M}_B^1$ clicks, the given state must be $\left|\phi_3\right\rangle$. Similarly, for the click $\mathbb{M}_B^2$, the state is $\left|\phi_4\right\rangle$. Also for the outcome $\mathbb{M}_B^3$ the isolated states are $\left|\phi_5\right\rangle$. Whenever $\mathbb{M}_B^4$ clicks the given state can be $\left|\phi_1\right\rangle$, $\left|\phi_2\right\rangle$. However, in that case, Alice can perform a measurement $\mathbb{M}_A \equiv\left\{\mathbb{M}_A^1:=\right.$ $P\left[|\mathbf{0}\rangle_A\right],$ $\mathbb{M}_A^2 := \mathbb{I} - \mathbb{M}_A^1\}$
	  to distinguish between these two states. This concludes the local discrimination protocol for the set $\mathcal{S}_2$. In the following, we will demonstrate a protocol to activate nonlocality without entanglement from this set.
	\begin{theorem} \cite{GhoshStrongActivation2022}
		The set $\mathcal{S}_2$ is a locally distinguishable set and can be transformed deterministically to a locally irreducible set via LPCC.  
	\end{theorem}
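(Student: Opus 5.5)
The statement has two parts: local distinguishability of $\mathcal{S}_2$, and a deterministic LPCC transformation of $\mathcal{S}_2$ into a locally irreducible set.

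\emph{Distinguishability.} This part is already settled by the protocol given just before the theorem. Bob's four-outcome projective measurement $\mathbb{M}_B$ isolates $|\phi_3\rangle$, $|\phi_4\rangle$ and $|\phi_5\rangle$. On the remaining outcome, Alice's measurement $\{P[|\mathbf{0}\rangle_A],\mathbb{I}-P[|\mathbf{0}\rangle_A]\}$ separates $|\phi_1\rangle$ from $|\phi_2\rangle$. I would only check that $\mathbb{M}_B^4$ leaves $|\phi_1\rangle,|\phi_2\rangle$ nonzero and kills the other three. This holds because the Bob-parts of $|\phi_1\rangle,|\phi_2\rangle$ are orthogonal to $|\mathbf{0}-\mathbf{4}\rangle$, $|\mathbf{2}-\mathbf{3}\rangle$ and the uniform vector.

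\emph{Activation.} I would use the OPLPM $\mathcal{K}_B=\{\mathcal{K}^B_1,\mathcal{K}^B_2\}$ already introduced in the main text, which projects Bob onto $\mathrm{span}\{|\mathbf{0}\rangle,|\mathbf{1}\rangle,|\mathbf{2}\rangle\}$ or onto $\mathrm{span}\{|\mathbf{3}\rangle,|\mathbf{4}\rangle,|\mathbf{5}\rangle\}$. In the qubit--qutrit picture this is just a computational-basis measurement of Bob's qubit $b_1$.

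\begin{itemize}
\item[(i)] \textbf{Orthogonality preservation.} Every state survives both outcomes with a nonzero component. For each outcome the five post-measurement states must be pairwise orthogonal; this is a direct inner-product check. For outcome $1$ one obtains the set displayed in the text, in $\mathbb{C}^3\otimes\mathbb{C}^3$. Relabelling $3,4,5\to0,1,2$, outcome $2$ gives
\[
|\mathbf{0}\rangle|\mathbf{1}-\mathbf{2}\rangle,\quad |\mathbf{2}\rangle|\mathbf{2}-\mathbf{0}\rangle,\quad |\mathbf{1}-\mathbf{2}\rangle|\mathbf{1}\rangle,\quad |\mathbf{0}-\mathbf{1}\rangle|\mathbf{0}\rangle,\quad |\mathbf{0}+\mathbf{1}+\mathbf{2}\rangle|\mathbf{0}+\mathbf{1}+\mathbf{2}\rangle .
\]
\item[(ii)] \textbf{Deterministic target.} Both resulting sets are, up to local unitaries (permutations of basis labels), the well-known five-state ``Tiles''-type UPB in $3\otimes3$. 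I would verify this equivalence explicitly by matching tiles.
\item[(iii)] \textbf{Irreducibility.} I would show that each post-measurement set is locally irreducible, in the sense of the definition of local irreducibility. Following the method of Appendix A, take an arbitrary OPLM element $E=M^\dagger M$ on Alice's side and impose $\langle\phi_i|E\otimes\mathbb{I}|\phi_j\rangle=0$ for pairs of states whose Bob-parts are nonorthogonal. For example, $|\mathbf{0}\rangle|\mathbf{0}-\mathbf{1}\rangle$ against $|\mathbf{0}-\mathbf{1}\rangle|\mathbf{2}\rangle$ fails to be useful because the Bob-parts are orthogonal; the useful pairs are those with overlapping Bob-parts, such as $|\mathbf{2}\rangle|\mathbf{1}-\mathbf{2}\rangle$ against the uniform state. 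These constraints force all off-diagonal entries $e_{ij}$ to vanish. The uniform state $|\mathbf{0}+\mathbf{1}+\mathbf{2}\rangle$ paired with the others then forces equal diagonal entries, so $E\propto\mathbb{I}$. By the symmetry of the tile structure, the same argument applies to Bob.
\end{itemize}

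Trivial OPLMs on both sides mean no state can be eliminated, so each outcome yields a locally irreducible set deterministically.

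The main obstacle is step (iii): choosing, for each matrix entry, a pair of states whose other-party overlap is nonzero and whose constraint isolates that entry. I would tabulate all pairs first, derive the linear equations, and solve them, for each of the two outcome sets and each party. If Tiles-equivalence can be established cleanly, I would instead invoke the known irreducibility of UPBs in $3\otimes3$ (a UPB admits only trivial OPLMs), which shortcuts the computation.
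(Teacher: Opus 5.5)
Your proposal is correct and follows the paper's own route. You use the same Bob-then-Alice distinguishing protocol and Bob's measurement $\mathcal{K}_B$, and you observe that both outcome sets are copies of the Tiles UPB in $3\otimes 3$ (the second after a cyclic relabelling of Bob's basis). The paper simply cites the local irreducibility of that UPB, so your proposed direct check is a supplement rather than a different argument. There is one small slip in that check. The pair $|\mathbf{2}\rangle|\mathbf{1}-\mathbf{2}\rangle$ and the uniform state gives no constraint, because $\langle\mathbf{1}-\mathbf{2}|\mathbf{0}+\mathbf{1}+\mathbf{2}\rangle=0$. The equal-diagonal conditions come instead from pairing the stopper with $|\mathbf{1}-\mathbf{2}\rangle|\mathbf{0}\rangle$ and $|\mathbf{0}-\mathbf{1}\rangle|\mathbf{2}\rangle$, whose Bob parts do overlap with it.
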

	\begin{proof} 
		Consider that Bob performs a local measurement $\mathcal{K}_B \equiv\left\{\mathcal{K}^B_1:=P\left[(|\mathbf{0}\rangle,|\mathbf{1}\rangle,|\mathbf{2}\rangle)_B\right], \mathcal{K}^B_2:=\right.$ $\left.P\left[(|\mathbf{3}\rangle,|\mathbf{4}\rangle,|\mathbf{5}\rangle)_B\right]\right\}$, $P\left[(|i\rangle,|j\rangle,\dots)_{\mathcal{P}}\right] = \left[(|i\rangle\langle i|+|j\rangle\langle j|+\dots)_{\mathcal{P}}\right]$, $\mathcal{P}$ stands for party. If $\mathcal{K}^B_1$ clicks, they end up with,
		\begin{figure}[h!]
			\centering
			\includegraphics[width=0.48\textwidth]{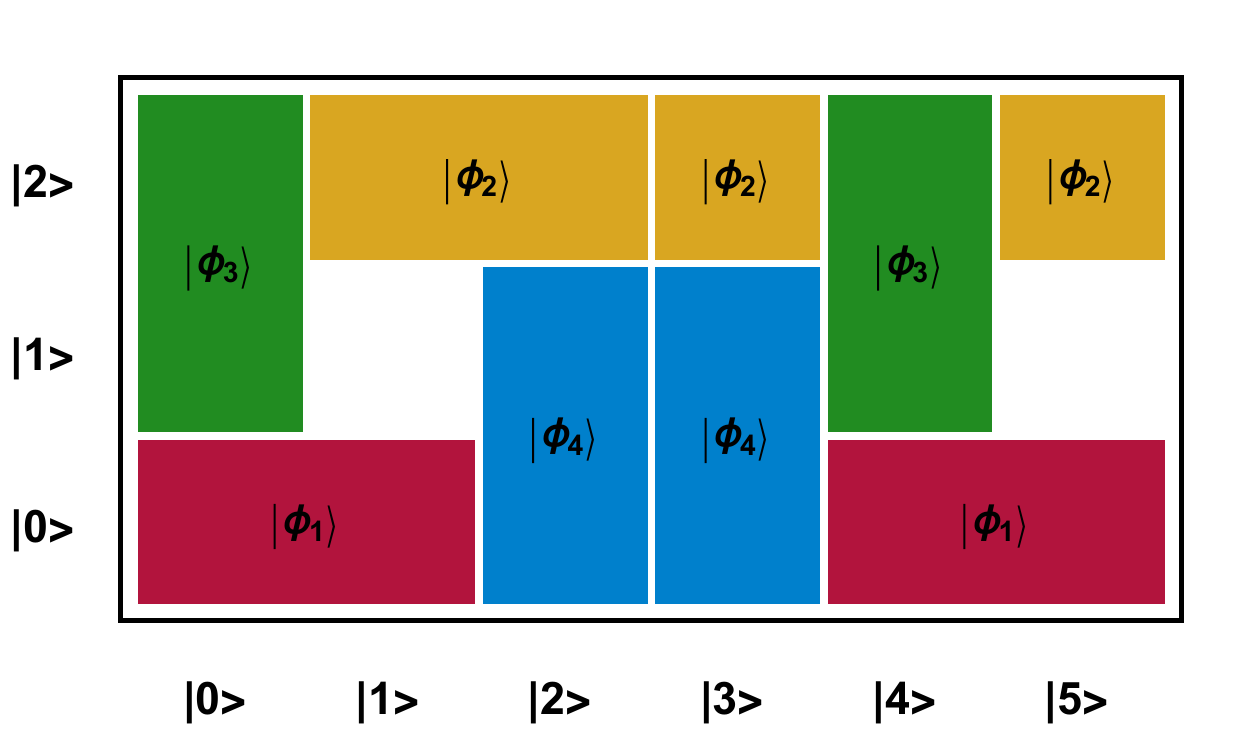}
			\caption{Tiling diagram for the states in \( \mathbf{\mathcal{S}_2} \). The outlined region indicates the support of Bob's measurement outcomes \( \mathcal{K}_1^B \) and \( \mathcal{K}_2^B \) resulting in post-measurement states are all contained in a UPB subspace.}\label{fig44-4}
		\end{figure}
		$$
		\left\{\begin{array}{c}
			|\mathbf{0}\rangle_A|\mathbf{0}-\mathbf{1}\rangle_B,|\mathbf{2}\rangle_A|\mathbf{1}-\mathbf{2}\rangle_B, |\mathbf{1-2}\rangle_A|\mathbf{0}\rangle_B,\\
			|\mathbf{0-1}\rangle_A|\mathbf{2}\rangle_B, 
			|\mathbf{0+1+2}\rangle_A|\mathbf{0}+\mathbf{1}+\mathbf{2}\rangle_B,
		\end{array}\right\}
		$$
		which is a locally irreducible set \cite{BennettUPB1999}. On the other hand, if Bob gets $\mathcal{K}^B_2$, they are then left with the following states,
		\[	\left\{\begin{array}{c}
			|\mathbf{0}\rangle_A|\mathbf{4}-\mathbf{5}\rangle_B,|\mathbf{2}\rangle_A|\mathbf{5}-\mathbf{3}\rangle_B, |\mathbf{1-2}\rangle_A|\mathbf{4}\rangle_B,\\
			|\mathbf{0-1}\rangle_A|\mathbf{3}\rangle_B, 
			|\mathbf{0+1+2}\rangle_A|\mathbf{3}+\mathbf{4}+\mathbf{5}\rangle_B,
		\end{array}\right\}
		\]
		which is a locally irreducible set. It is clear that the five post-measurement states for each cases of Bob's measurement $\mathcal{K}^B_1,\;\mathcal{K}^B_2$ form the celebrated unextendable product basis (UPB) \cite{BennettPB1999,BennettUPB1999} in $\mathbf{C^3 \otimes C^3}$, see Fig.~\ref{fig44-4}. It has been well established that UPB is locally indistinguishable \cite{Divincinzo,BennettUPB1999}. So, the set $\mathcal{S}_3$ is activable by LPCC. Hence, this completes the proof.
	\end{proof}
\end{document}